\documentclass[a4paper,USenglish,cleveref,numberwithinsect]{lipics-v2019}

\graphicspath{{./figures/}}
\bibliographystyle{plainurl}

\title{A 4-Approximation of the $\frac{2\pi}{3}$-MST}

\author{Stav Ashur}
{Department of Computer Science, Ben-Gurion University of the Negev, Israel}
{stavshe@post.bgu.ac.il}
{}
{}
\author {Matthew J. Katz}
{Department of Computer Science, Ben-Gurion University of the Negev, Israel}
{matya@cs.bgu.ac.il}
{}
{Partially supported by grant 1884/16 from the Israel Science Foundation.}
\authorrunning{S. Ashur and M. Katz}

\Copyright{Stav Ashur and Matthew Katz}

\ccsdesc[300]{Theory of computation~Computational geometry}
\ccsdesc[300]{Theory of computation~Design and analysis of algorithms}

\keywords{bounded-angle spanning tree, bounded-degree spanning tree, hop-spanner}






\EventEditors{}
\EventNoEds{2}
\EventLongTitle{}
\EventShortTitle{}
\EventAcronym{}
\EventYear{2021}
\EventDate{}
\EventLocation{}
\EventLogo{}
\SeriesVolume{}
\ArticleNo{}

\nolinenumbers 



\usepackage{cite} 
\usepackage{amssymb}
\usepackage{graphicx}
\usepackage{algorithm}
\usepackage[noend]{algorithmic}
\usepackage{microtype}
\usepackage{mathtools} 
\usepackage{amsthm}
\usepackage{enumerate}
\usepackage{wrapfig}
\usepackage{multirow}
\usepackage{nicefrac}
\usepackage[colorinlistoftodos,prependcaption,textsize=tiny]{todonotes}

\usepackage{url}   

\definecolor{forestgreen}{rgb}{0.13, 0.55, 0.13}




\newtheorem{observation}[theorem]{Observation}




\def\eps{{\varepsilon}}

\newcommand{\upt}[2]{\mathcal{P}_{\{#1,#2\}}}
\newcommand{\pt}[2]{\mathcal{P}_{#1,#2}}
\newcommand{\spt}[3]{\mathcal{P}_{#1,#2}^{#3}}


\date{}

\newcommand{\old}[1]{{{}}}

\newcommand{\stav}[1]{\todo[linecolor=red,backgroundcolor=blue!10,bordercolor=red]{\textbf{Stav: }#1}}



\old{

\documentclass{article}
\usepackage[utf8]{inputenc}
\usepackage{amsthm}
\usepackage[english]{babel}
\usepackage{graphicx}
\usepackage{wrapfig}
\usepackage{subcaption}
\usepackage{xcolor}
\usepackage{amsmath} 
\usepackage{bbm}
\usepackage{amsfonts}
\usepackage{cite}
\usepackage{csquotes}

\usepackage[pdftex, plainpages = false, pdfpagelabels, 
				bookmarks=false,
				bookmarksopen = true,
				bookmarksnumbered = true,
				breaklinks = true,
				linktocpage,
				pagebackref,
				colorlinks = true,  
				linkcolor = blue,
				urlcolor  = blue,
				citecolor = red,
				anchorcolor = green,
				hyperindex = true,
				hyperfigures
				]{hyperref}

\usepackage[nameinlink]{cleveref}

\graphicspath{{./figures/}}

\newcommand{\upt}[2]{\mathcal{P}_{\{#1,#2\}}}
\newcommand{\pt}[2]{\mathcal{P}_{#1,#2}}
\newcommand{\spt}[3]{\mathcal{P}_{#1,#2}^{#3}}

\newtheorem{theorem}{Theorem}
\newtheorem{corollary}{Corollary}[theorem]

\newtheorem{claim}[theorem]{Claim}

\newtheorem{definition}[theorem]{definition}

\newcommand{\old}[1]{{{}}}

\newcommand{\stav}[1]{\todo[linecolor=red,backgroundcolor=blue!10,bordercolor=red]{\textbf{Stav: }#1}}

\select@language{english}

\title{alpha Bounded Spanning Trees}
\author{stavshe }
\date{}
}

\begin{document}

\maketitle

\begin{abstract}
	Bounded-angle (minimum) spanning trees were first introduced in the context of wireless networks with directional antennas. They are reminiscent of bounded-degree spanning trees, which have received significant attention. 
	Let $P = \{p_1,\ldots,p_n\}$ be a set of $n$ points in the plane, let $\Pi$ be the polygonal path $(p_1,\ldots,p_n)$, and let $0 < \alpha < 2\pi$ be an angle. An $\alpha$-spanning tree ($\alpha$-ST) of $P$ is a spanning tree of the complete Euclidean graph over $P$, with the following property: For each vertex $p_i \in P$, the (smallest) angle that is spanned by all the edges incident to $p_i$ is at most $\alpha$. An $\alpha$-minimum spanning tree ($\alpha$-MST) is an $\alpha$-ST of $P$ of minimum weight, where the weight of an $\alpha$-ST is the sum of the lengths of its edges. In this paper, we consider the problem of computing an $\alpha$-MST, for the important case where $\alpha = \frac{2\pi}{3}$. We present a simple 4-approximation algorithm, thus improving upon the previous results of Aschner and Katz and Biniaz et al., who presented algorithms with approximation ratios 6 and $\frac{16}{3}$, respectively.
	
	In order to obtain this result, we devise a simple $O(n)$-time algorithm for constructing a $\frac{2\pi}{3}$-ST\, ${\cal T}$ of $P$, such that ${\cal T}$'s weight is at most twice that of $\Pi$ and, moreover, ${\cal T}$ is a 3-hop spanner of $\Pi$. This latter result is optimal in the sense that for any $\eps > 0$ there exists a polygonal path for which every $\frac{2\pi}{3}$-ST has weight greater than $2-\eps$ times the weight of the path. 
\end{abstract}

\section{Introduction}

Let $P = \{p_1,\ldots,p_n\}$ be a set of $n$ points in the plane. An $\alpha$-spanning tree ($\alpha$-ST) of $P$, for an angle $0 < \alpha < 2\pi$, is a spanning tree of the complete Euclidean graph over $P$, with the following property: For each vertex $p_i \in P$, the (smallest) angle that is spanned by all the edges incident to $p_i$ is at most $\alpha$ (see Figure~\ref{fig:120-ST}). An $\alpha$-minimum spanning tree ($\alpha$-MST) is then an $\alpha$-ST of $P$ of minimum weight, where the weight of an $\alpha$-ST is the sum of the lengths of its edges.

\begin{figure}[ht]
	\centering
	\includegraphics[scale=0.6, page=1]{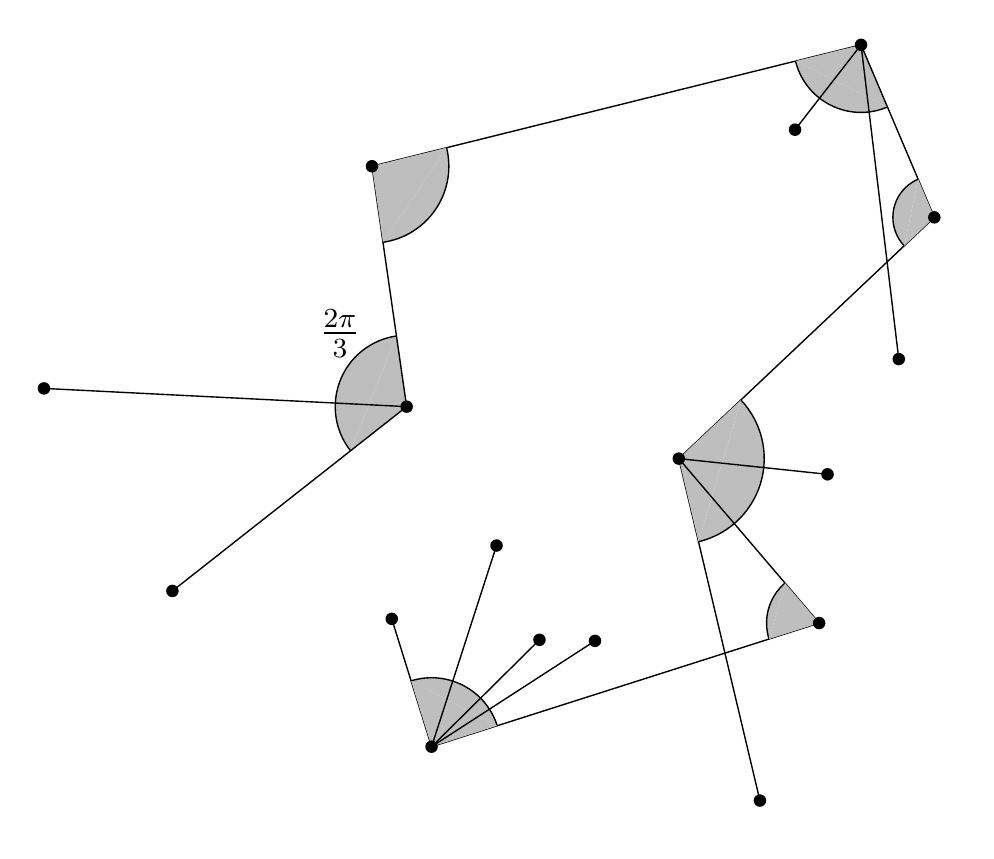}
	\caption{A $\frac{2\pi}{3}$-ST.}
	\label{fig:120-ST}
\end{figure} 

Since there always exists a MST of $P$ in which the degree of each vertex is at most 5, the interesting range for $\alpha$ is $(0,\frac{8\pi}{5})$. The concept of bounded-angle (minimum) spanning tree (i.e., of an $\alpha$-(M)ST) was introduced by Aschner and Katz~\cite{AschnerK17}, who arrived at it through the study of wireless networks with directional antennas. However, it is interesting in its own right. The study of bounded-angle (minimum) spanning trees is also related to the study of bounded-degree (minimum) spanning trees, which received considerable attention (see, e.g.,~\cite{PapadimitriouV84, Chan04,FeketeKKRY97,JothiR09,KhullerRY96}). (A degree-$k$ ST, is a spanning tree in which the degree of each vertex is at most $k$, and a degree-$k$ MST is a degree-$k$ ST of minimum weight.)

It is easy to see that an $\alpha$-ST of $P$, for $\alpha < \frac{\pi}{3}$, does not always exist; think, for example, of the corners of an equilateral triangle. On the other hand, it is known (see~\cite{AckermanGP13,AichholzerHHHPSSV13,CarmiKLR11}) that for any $\alpha \ge \frac{\pi}{3}$, there always exists an $\alpha$-ST of $P$. 

The next natural question is what is the status of the problem of computing an $\alpha$-MST, for a given `typical' angle $\alpha$. Aschner and Katz~\cite{AschnerK17} proved that (at least) for $\alpha = \pi$ and for $\alpha = \frac{2\pi}{3}$ the problem is NP-hard, and, therefore, it calls for efficient approximation algorithms.  

Obviously, the weight of an $\alpha$-MST of $P$, for any angle $\alpha$, is at least the weight of an MST of $P$, so if we develop an algorithm for constructing an $\alpha$-ST, for some angle $\alpha$, and prove that the weight of the trees constructed by the algorithm never exceeds some constant $c$ times the weight of the corresponding MSTs, then we have a $c$-approximation algorithm for computing an $\alpha$-MST. Aschner et al.~\cite{AschnerKM13} showed that this approach is relevant only if $\alpha \ge \frac{\pi}{2}$, since for any $\alpha < \frac{\pi}{2}$, there exists a set of points for which the ratio between the weights of the $\alpha$-MST and the MST is $\Omega(n)$. 

In this paper, we focus on the important case where $\alpha = \frac{2\pi}{3}$. That is, we are interested in an algorithm for computing a `good' approximation of $\frac{2\pi}{3}$-MST, where by good we mean that the weight of the output $\frac{2\pi}{3}$-ST is not much larger than that of an MST (and thus of a $\frac{2\pi}{3}$-MST). Aschner and Katz~\cite{AschnerK17} presented a 6-approximation algorithm for the problem. Subsequently, Biniaz et al.~\cite{BiniazBLM20} described an improved $\frac{16}{3}$-approximation algorithm. In this paper, we manage to reduce the approximation ratio to 4, by taking a completely different approach than the two previous algorithms. 

Most of the paper is devoted to proving Theorem~\ref{thm:path2approx}, which is of independent interest. Our main result, i.e., the 4-approximation algorithm, is obtained as an easy corollary of this theorem. Let $\Pi$ denote the polygonal path $(p_1,...,p_n)$. Then, Theorem~\ref{thm:path2approx} states that one can construct a $\frac{2\pi}{3}$-ST\, ${\cal T}$ of $P$, such that (i) the weight of ${\cal T}$, $\omega({\cal T})$, is at most $2\omega(\Pi)$, and (ii) ${\cal T}$ is a 3-hop spanner of $\Pi$ (i.e., if there is an edge between $p$ and $q$ in $\Pi$, then there is a path consisting of at most 3 edges between $p$ and $q$ in ${\cal T}$). 
Notice that 2 is the best approximation ratio that one can hope for, since Biniaz et al.~\cite{BiniazBLM20} showed that for any $\alpha < \pi$, the weight of an $\alpha$-MST of a set of $n$ points on the line, such that the distance between consecutive points is 1, is at least $2n-3$, whereas the weight of an MST is clearly $n-1$. (This lower bound is also mentioned without a proof in~\cite{AschnerK17}.)

We prove Theorem~\ref{thm:path2approx} by presenting an $O(n)$-time algorithm for constructing ${\cal T}$ and proving its correctness.
The algorithm is very simple and easy to implement, but arriving at it and proving its correctness is far from trivial. One approach for constructing a $\frac{2\pi}{3}$-ST of $P$ is to assign to each vertex of $P$ an orientation, where an orientation of a vertex $p$ is a cone of angle $\frac{2\pi}{3}$ with apex at $p$. The assignment of orientations induces a \emph{transmission} graph $G$ (over $P$), where $\{p_i,p_j\}$ is an edge of $G$ if and only if $p_j$ is in $p_i$'s cone and $p_i$ is in $p_j$'s cone. Now, if $G$ is connected, then by computing a minimum spanning tree of $G$ one obtains a $\frac{2\pi}{3}$-ST of $P$. The challenge is of course to determine the orientations of the vertices, so that $G$ is connected and the weight of a minimum spanning tree of $G$ is bounded by a small constant times $\omega(\Pi)$.  

Next, we describe some of the ideas underlying our algorithm for constructing ${\cal T}$. 
Assume for simplicity that $n$ is even and consider the sequence of edges $X$ obtained from $\Pi$ by removing all the edges at even position (i.e., by removing the edges $\{p_2,p_3\},\{p_4,p_5\},\ldots$). For each edge $e=\{p,q\} \in X$, we consider the partition of the plane into four regions induced by $e$, see Figure~\ref{fig:2_points}. This partition determines for each of $e$'s vertices three `allowable' orientations, see Figure~\ref{fig:basic_positions}. Our algorithm assigns to each vertex of $\Pi$ one of its three allowable orientations, such that the resulting transmission graph $G$ contains the edges in $X$ and at least one edge between any two adjacent edges in $X$. Finally, by keeping only the edges in $X$ and a single edge between any two adjacent edges, we obtain ${\cal T}$. The novelty of the algorithm is in the way it assigns the orientations to the vertices to ensure that the resulting graph satisfies these conditions.

We now discuss the two previous results on computing an approximation of a $\frac{2\pi}{3}$-MST of $P$, and some of the related results.
The first stage in the previous algorithms, as well as in the new one, is to compute a MST of $P$, $\mathrm{MST}(P)$, and from it a spanning path $\Pi$ of $P$ of weight at most $2\omega(\mathrm{MST}(P))$. ($\Pi$ is obtained by listing the vertices of $P$ through an in-order traversal of $\mathrm{MST}(P)$, where a vertex is added to the list when it is visited for the first time.) The algorithm of Aschner and Katz~\cite{AschnerK17} operates on the path $\Pi$. It constructs a $\frac{2\pi}{3}$-ST of $P$ from $\Pi$ of weight at most $3\omega(\Pi)$, and thus of weight at most $6\omega(\mathrm{MST}(P))$. The algorithm of Biniaz et al.~\cite{BiniazBLM20} can operate only on non-crossing paths, so it first transforms $\Pi$ to a non-crossing path $\Pi'$ (by iteratively flipping crossing edges), such that $\omega(\Pi') \le \omega(\Pi)$. Then, it constructs a $\frac{2\pi}{3}$-ST of $P$ from $\Pi'$ of weight at most $\frac{8}{3}\omega(\Pi')$, and thus of weight at most $\frac{16}{3}\omega(\mathrm{MST}(P))$. The new algorithm operates directly on $\Pi$. It constructs a $\frac{2\pi}{3}$-ST of $P$ from $\Pi$ of weight at most $2\omega(\Pi)$, and thus of weight at most $4\omega(\mathrm{MST}(P))$.  

Notice that 4 is the best approximation ratio possible, for any such two-stage algorithm, provided the stages are independent. This is true since (i) Fekete et al.~\cite{FeketeKKRY97} showed that for any $\eps > 0$ there exists a point set for which any spanning path has weight at least $2 - \eps$ times the weight of an MST, and (ii) as mentioned above, for any $\eps > 0$, there exists a point set and a corresponding spanning path for which any $\frac{2\pi}{3}$-ST has weight at least $2 - \eps$ times the weight of the path. 

As for other values of $\alpha$, Aschner and Katz~\cite{AschnerK17} presented a 16-approximation algorithm for computing a $\frac{\pi}{2}$-MST of $P$.
The best known approximations of the degree-$k$ MST, for $k = 2,3,4$, imply a 2-approximation of the $\pi$-MST, a 1.402-approximation of the $\frac{4\pi}{3}$-MST~\cite{Chan04}, and a 1.1381-approximation of the $\frac{3\pi}{2}$-MST~\cite{Chan04,JothiR09}.

\section{Preliminaries}

\begin{figure}[ht]
	\centering
	\includegraphics[scale=0.6, page=1]{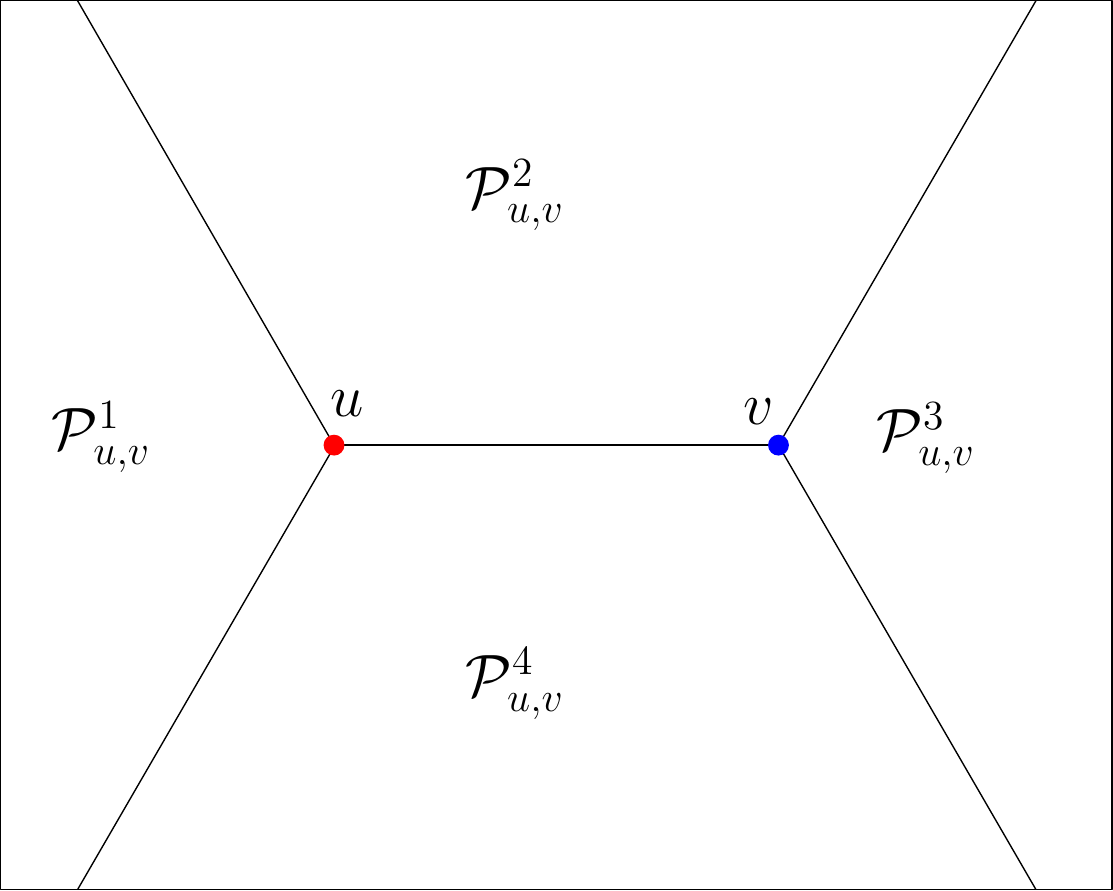}
	\caption{The partition of the plane $\pt{u}{v}$ induced by the ordered pair of points $(u,v)$.}
	\label{fig:2_points}
\end{figure}

\begin{definition}
Any ordered pair $(u,v)$ of points in the plane, induces a partition of the plane into four regions, which we denote by $\pt{u}{v}$; see Figure~\ref{fig:2_points}. We denote the four regions by $\spt{u}{v}{1}$, $\spt{u}{v}{2}$, $\spt{u}{v}{3}$, and $\spt{u}{v}{4}$, as depicted in Figure~\ref{fig:2_points}. Notice that the partitions $\pt{u}{v}$ and $\pt{v}{u}$ are identical, where $\spt{u}{v}{1} = \spt{v}{u}{3}$, $\spt{u}{v}{2} = \spt{v}{u}{4}$, etc. Sometimes, we prefer to consider the points $u$ and $v$ as an unordered pair of points, in which case we denote the partition induced by them as $\upt{u}{v}$. In $\upt{u}{v}$, we distinguish between the two \emph{side} regions, which are $\spt{u}{v}{1}$ and $\spt{u}{v}{3}$ (alternatively, $\spt{v}{u}{3}$ and $\spt{v}{u}{1}$), and the two \emph{center} regions, which are $\spt{u}{v}{2}$ and $\spt{u}{v}{4}$ (alternatively, $\spt{v}{u}{4}$ and $\spt{v}{u}{2}$). 
\end{definition}

\begin{figure}[ht]
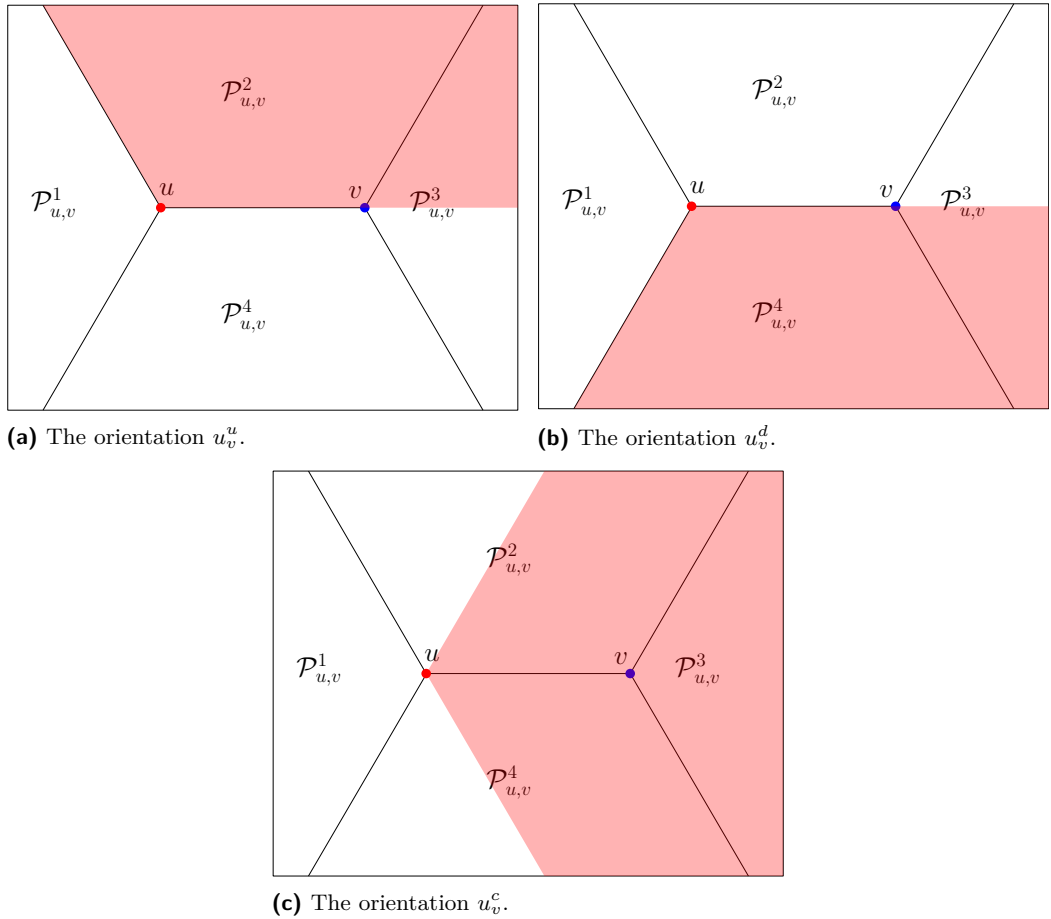

	\begin{subfigure}{0.48\textwidth}
		\includegraphics[width=\textwidth, page=2]{2_points.pdf}
		\subcaption{The orientation $u_v^u$.}
	\end{subfigure}
	\hfil
	\centering
	\begin{subfigure}{0.48\textwidth}
		\includegraphics[width=\textwidth, page=4]{2_points.pdf}
		\subcaption{The orientation $u_v^d$.}
	\end{subfigure}
	
	\vspace{0.2cm}
	\begin{subfigure}{0.48\textwidth}
		\includegraphics[width=\textwidth, page=3]{2_points.pdf}
		\subcaption{The orientation $u_v^c$.}
	\end{subfigure}
		
	\caption{The three basic orientations of $u$ with respect to $v$.}
	\label{fig:basic_positions}
\end{figure}
The \emph{orientation} of a point $u$, is the orientation of a $\frac{2\pi}{3}$-cone with apex at $u$; we refer to this cone as the \emph{transmission} cone of $u$. In the following definition, we define the three \emph{basic} orientations of $u$ with respect to another point $v$, based on $\pt{u}{v}$; see Figure~\ref{fig:basic_positions}.

\begin{definition}For a pair of points $u$ and $v$, the three basic orientations of $u$ with respect to $v$ are:
\begin{description}
	\item{$\mathbf{u_v^u}$:}
		The only orientation of $u$, such that $\spt{u}{v}{2}$ is fully contained in the transmission cone of $u$,
	\item{$\mathbf{u_v^d}$:}
		The only orientation of $u$, such that $\spt{u}{v}{4}$ is fully contained in the transmission cone of $u$, and
	\item{$\mathbf{u_v^c}$:}
		The only orientation of $u$, such that $\spt{u}{v}{3}$ is fully contained in the transmission cone of $u$.
\end{description}
\end{definition}

Notice that in each of the basic orientation of $u$ with respect to $v$, we have that $v$ lies in $u$'s cone. Therefore, for any assignment of basic orientation to $u$ (with respect to $v$) and any assignment of basic orientation to $v$ (with respect to $u$), the edge $\{u,v\}$ will be present in the resulting transmission graph.

Next, we prove three claims concerning the relationship between $\upt{u}{v}$ and $\upt{x}{y}$, where $\{u,v\}$ and $\{x,y\}$ are unordered pairs of points. 

\begin{figure}[ht]
	\centering
	\includegraphics[scale=0.5]{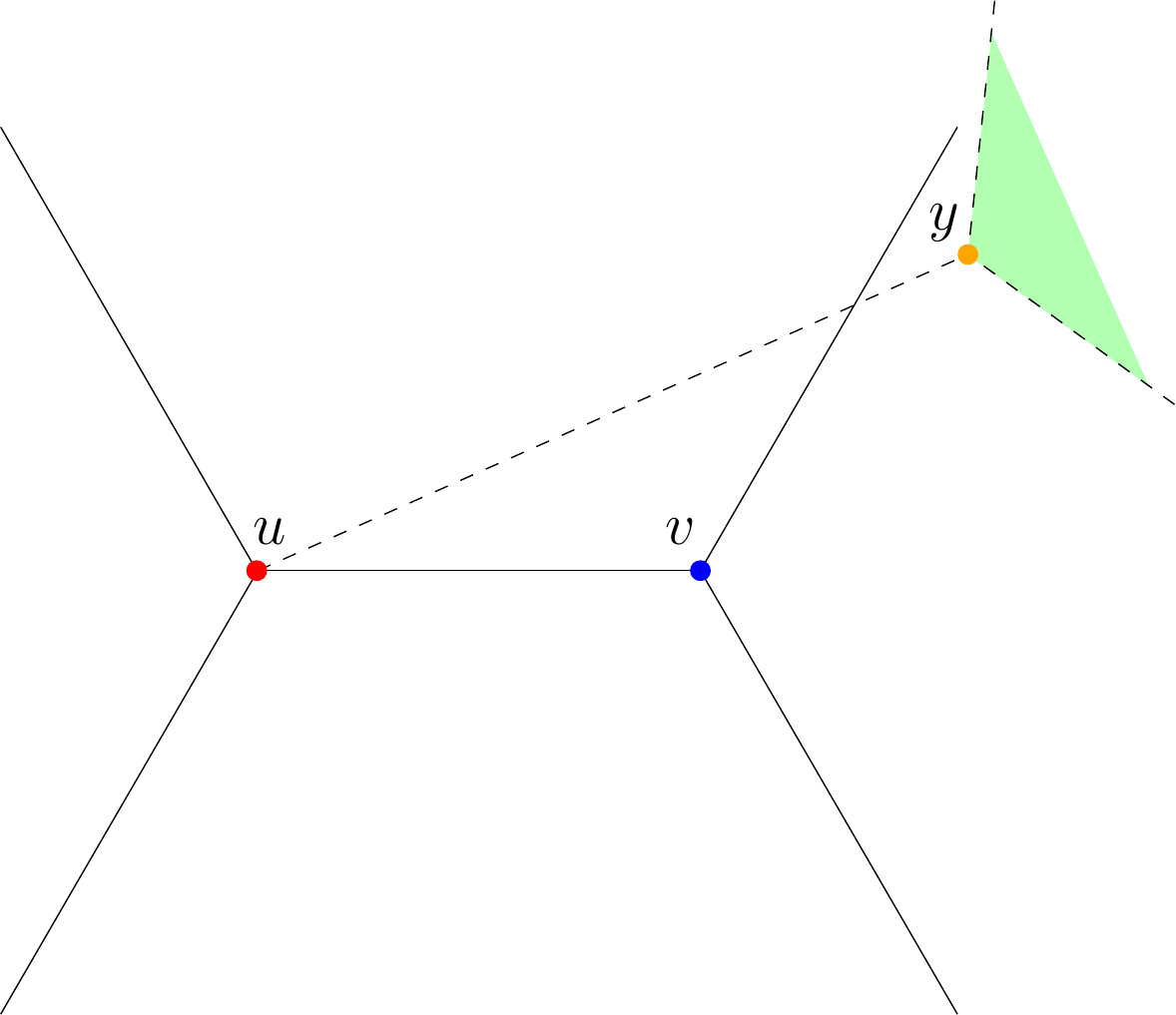}
	\caption{Proof of Claim~\ref{clm:1324}: If $y\in\spt{u}{v}{3}$ and $u\in\spt{x}{y}{3}$, then $x$ must lie in the green region.}
	\label{fig:claim1324}
\end{figure}

\begin{claim}
\label{clm:1324}
	Let $\{u,v\}$ and $\{x,y\}$ be two unordered pairs of points. If $x$ lies in one of the side regions of $\mathcal{P}_{\{u,v\}}$ and $y$ lies in the other, then both $u$ and $v$ lie in the union of the center regions of $\mathcal{P}_{\{x,y\}}$.
\end{claim}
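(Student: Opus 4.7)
The plan is to verify the claim by a direct angular computation after a coordinate normalization. By the symmetry that swaps the two side regions of $\upt{u}{v}$, I may assume $x\in \spt{u}{v}{1}$ and $y\in\spt{u}{v}{3}$, and fix coordinates with $u=(0,0)$ and $v=(1,0)$. In these coordinates $\spt{u}{v}{1}$ is the closed $\frac{2\pi}{3}$-wedge at $u$ opposite to $v$, bounded by the two rays from $u$ making angle $\frac{2\pi}{3}$ with the ray $uv$, and $\spt{u}{v}{3}$ is the symmetric wedge at $v$. The claim is equivalent to the four strict inequalities $\angle yxu,\angle yxv,\angle xyu,\angle xyv<\frac{2\pi}{3}$, since these are exactly what excludes $u$ and $v$ from the two side regions of $\upt{x}{y}$.

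I would first observe that the hypothesis $x\in\spt{u}{v}{1}$ immediately puts the direction from $x$ to $u$ inside the closed $\frac{2\pi}{3}$-cone at $x$ whose axis points in the direction $\overrightarrow{uv}$, and a short calculation based on the bounded slope of $v-x$ puts the direction from $x$ to $v$ inside the open version of the same cone. Symmetric statements hold at $y$ with axis $\overrightarrow{vu}$. The technical core is then to locate the direction of the segment $xy$ itself: I would show that $y-x$ makes an angle strictly less than $\frac{\pi}{3}$ with the positive horizontal direction, so that $x\to y$ lies strictly inside the same $\frac{2\pi}{3}$-cone at $x$ (and $y\to x$ strictly inside the cone at $y$). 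The vertical component of $y-x$ is bounded by $\frac{\sqrt{3}}{2}(|ux|+|vy|)$ using the $\pm\frac{\pi}{3}$ wedge constraints on the angles of $u-x$ and $y-v$, while the horizontal component is at least $|uv|+\frac{1}{2}(|ux|+|vy|)=1+\frac{1}{2}(|ux|+|vy|)$; dividing, the slope is at most $\frac{\sqrt{3}(|ux|+|vy|)}{2+(|ux|+|vy|)}$, which is strictly less than $\sqrt{3}$ thanks to the positive term coming from $|uv|=1$.

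Combining the two steps, at $x$ both of the directions $x\to u$ and $x\to y$ lie in a common closed $\frac{2\pi}{3}$-cone with the $xy$-direction strictly interior, giving $\angle yxu<\frac{2\pi}{3}$; the other three inequalities follow by identical reasoning at the opposite endpoint and/or with $u$ replaced by $v$. The main care point is precisely this strict slope bound, which is what prevents the angle from reaching $\frac{2\pi}{3}$ in the boundary case where $x$ (or $y$) lies exactly on a bounding ray of its defining wedge: one cone membership may be on the boundary, but the other remains strictly interior, so the angular separation stays strictly below $\frac{2\pi}{3}$.
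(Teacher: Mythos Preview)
Your argument is correct. Normalizing coordinates and bounding the slope of $y-x$ by $\frac{\sqrt{3}(|ux|+|vy|)}{2+(|ux|+|vy|)}<\sqrt{3}$ indeed pins the direction $\overrightarrow{xy}$ strictly inside the horizontal $\frac{2\pi}{3}$-cone, and together with the (closed) bound on $\overrightarrow{xu}$ and the (strict) bound on $\overrightarrow{xv}$ this yields all four inequalities $\angle yxu,\angle yxv,\angle xyu,\angle xyv<\frac{2\pi}{3}$, which is exactly the statement that $u$ and $v$ avoid both side regions of $\upt{x}{y}$.

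Your route, however, is quite different from the paper's. The paper argues by contradiction and purely pictorially: assuming, say, $u\in\spt{x}{y}{3}$, it observes that the ray from $y$ through $x$ is forced into a region (the ``green region'' in the figure) that is disjoint from $\spt{u}{v}{1}$, contradicting $x\in\spt{u}{v}{1}$; the symmetric case $u\in\spt{x}{y}{1}$ is handled by swapping the roles of $x$ and $y$. What the paper's approach buys is brevity and a clean geometric picture, at the cost of leaning on the figure and leaving the boundary cases implicit. What your approach buys is a self-contained quantitative argument that makes the strictness of the conclusion explicit (the extra ``$+1$'' from $|uv|$ in the horizontal component is precisely what forces the strict inequality even when $x$ or $y$ sits on a bounding ray of its wedge), and it does not require a case split on which side region $u$ might fall into. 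Either argument is fine here; yours is slightly longer but more robust to write down without pictures.
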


\begin{proof}
\old{
	Assume for a contradiction that $x\in\spt{u}{v}{1}$, $y\in\spt{u}{v}{3}$, and $u\in\spt{x}{y}{3}$. For simplicity we also assume that the segment $\overline{uv}$ is horizontal, and that $y$ is not beneath the line containing it. We now notice that for the line segment $\overline{uy}$ $x$ must reside in the cone of angle $\frac{2\pi}{3}$ opposite to $\overline{uy}$ (See Figure~\ref{fig:claim1324}). This means a $\frac{\pi}{6}$ turn either left or right, and thus the cone of feasible locations of $x$ does not intersect with $\spt{u}{v}{1}$, contradicting the assumption.
}
	Assume, e.g., that $x\in\spt{u}{v}{1}$ and $y\in\spt{u}{v}{3}$. If $u$ is not in one of the center regions of $\mathcal{P}_{\{x,y\}}$, then it is in one of the side regions of $\mathcal{P}_{\{x,y\}}$. But, if $u\in\spt{x}{y}{1}$, then it is impossible that $y\in\spt{u}{v}{3}$, and if $u\in\spt{x}{y}{3}$, then it is impossible that $x\in\spt{u}{v}{1}$.
	Consider for example the latter case, i.e., $u\in\spt{x}{y}{3}$, and assume, without loss of generality, that the line segment 
	$\overline{uv}$ is horizontal, with $u$ to the left of $v$, and that $y$ is not below the line containing $\overline{uv}$ (see Figure~\ref{fig:claim1324}). Then, the requirement $u\in\spt{x}{y}{3}$ implies that $\spt{x}{y}{3}$ and the green region in the figure are disjoint (when viewed as open regions), which, in turn, implies that $x$ must lie in the green region. But this is impossible since the green region and $\spt{u}{v}{1}$ are disjoint.
\end{proof}

\old{
\stav{A possible claim that combines claims \ref{clm:23} and \ref{clm:32}. Also added a figure that illustrates the claims.}
{\color{olive}{
		\begin{claim}
			\label{clm:combined_claim}
			Let $\{u,v\}$ and $\{x,y\}$ be two unordered pairs of points, such that $x$ lies in one of the side regions of $\pt{u}{v}$, say in the one adjacent to $v$, and $y$ lies in one of the center regions of $\pt{u}{v}$. The following holds:
			\begin{enumerate}
				\item If $u$ lies in the side region adjacent to $x$, then so does $v$,
				\item If $v$ lies in the side region adjacent to $y$, then so does $u$.
			\end{enumerate}
			See Figure~\ref{fig:combined_claim}
		\end{claim}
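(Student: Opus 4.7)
The strategy is to reduce as much as possible to Claim~\ref{clm:1324}, and handle the remaining case by an explicit geometric argument in the same spirit as the proof of that claim. Both parts are symmetric, so I will focus on part~(1); part~(2) follows by interchanging the roles of the endpoints.

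For part~(1), suppose $x$ lies in the side region of $\pt{u}{v}$ adjacent to $v$, that $y$ lies in a center region of $\pt{u}{v}$, and that $u$ lies in the side region of $\pt{x}{y}$ adjacent to $x$. I would argue by contradiction, and split on where $v$ lies in $\pt{x}{y}$. There are two cases to exclude: (a) $v$ lies in the side region of $\pt{x}{y}$ adjacent to $y$, or (b) $v$ lies in one of the two center regions of $\pt{x}{y}$.

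Case~(a) is immediate from Claim~\ref{clm:1324} applied with the roles of $(u,v)$ and $(x,y)$ swapped: if $u$ and $v$ were in opposite side regions of $\pt{x}{y}$, then both $x$ and $y$ would have to lie in center regions of $\pt{u}{v}$, contradicting the assumption that $x$ lies in a side region of $\pt{u}{v}$. Case~(b) is the one that requires real work. Here my plan is to fix coordinates so that $\overline{uv}$ is horizontal with $u$ to the left of $v$ and $y$ above the line containing $\overline{uv}$, identify explicitly the two cones that constitute the side region of $\pt{u}{v}$ adjacent to $v$ and the center region containing $y$, and then track the angular constraints imposed on the segment $\overline{xy}$. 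The constraint ``$u$ in the side region of $\pt{x}{y}$ adjacent to $x$'' pins $u$ inside a narrow $\tfrac{\pi}{3}$ cone on the opposite side of $x$ from $y$, and I would show that combining this with the assumption ``$v$ in a center region of $\pt{x}{y}$'' forces $v$ to sit in a region disjoint from the one it must occupy to satisfy $x\in\spt{u}{v}{3}$ (the side region adjacent to $v$). As in the proof of Claim~\ref{clm:1324}, this is essentially a ``feasible-cone'' argument: the locus of admissible positions for $v$ shrinks to the empty set.

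Part~(2) is handled symmetrically, with Claim~\ref{clm:1324} again excluding the case where $u$ and $v$ end up in opposite side regions of $\pt{x}{y}$, and an analogous coordinate argument ruling out the case where $u$ lies in a center region of $\pt{x}{y}$. The main obstacle I anticipate is the bookkeeping in the direct geometric step of case~(b): one has to be careful about which of the two center regions of $\pt{x}{y}$ could contain $v$ and to orient the cones consistently so that the disjointness of the ``feasible'' and ``required'' regions for $v$ becomes transparent, ideally via a single figure analogous to Figure~\ref{fig:claim1324}.
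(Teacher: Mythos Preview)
Your plan can be made to work, but it is more involved than the paper's, and one of your shortcuts is not valid.

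First, the two parts are \emph{not} related by a simple endpoint swap. The hypotheses on $x$ and $y$ are asymmetric ($x$ is in a side region, $y$ in a center region), so interchanging endpoints does not map part~(1) to part~(2). The paper in fact states and proves the two parts as separate claims (Claims~\ref{clm:23} and~\ref{clm:32}), with genuinely different contradictions: in part~(1) the hypothesis on $y$ is violated, in part~(2) the hypothesis on $x$ is violated.

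Second, the paper avoids your case split entirely. For part~(1), assume $u\in\spt{x}{y}{1}$ but $v\notin\spt{x}{y}{1}$. Then a boundary ray of $\spt{x}{y}{1}$ emanating from $x$ crosses the (horizontal) segment $\overline{uv}$. Since $x\in\spt{u}{v}{3}$, this crossing pins down the direction of that boundary ray tightly enough that the ray from $x$ through $y$ (which makes a fixed $\tfrac{\pi}{6}$ angle with the boundary ray) is forced to lie inside $\spt{u}{v}{3}$. Hence $y\in\spt{u}{v}{3}$, contradicting the assumption that $y$ is in a center region. The argument uses only that $v$ has left $\spt{x}{y}{1}$, not where it went; so your case~(a) (handled via Claim~\ref{clm:1324}) and case~(b) collapse into one. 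Part~(2) is argued the same way, but now the boundary ray emanates from $y$ and the conclusion is that $x$ lands in $\spt{u}{v}{2}$, contradicting $x\in\spt{u}{v}{3}$.

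Your case~(b) plan --- tracking the feasible locus for $v$ --- is the harder direction and is stated somewhat vaguely (``forces $v$ to sit in a region disjoint from the one it must occupy to satisfy $x\in\spt{u}{v}{3}$'' does not quite parse, since $x\in\spt{u}{v}{3}$ constrains $x$, not $v$). The cleaner move is to let the constraints push $y$ (resp.\ $x$) out of its assumed region, rather than to chase $v$.
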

}}

\begin{figure}[ht]
	\centering
	\begin{subfigure}{0.45\textwidth}
		\includegraphics[width=\textwidth, page=3]{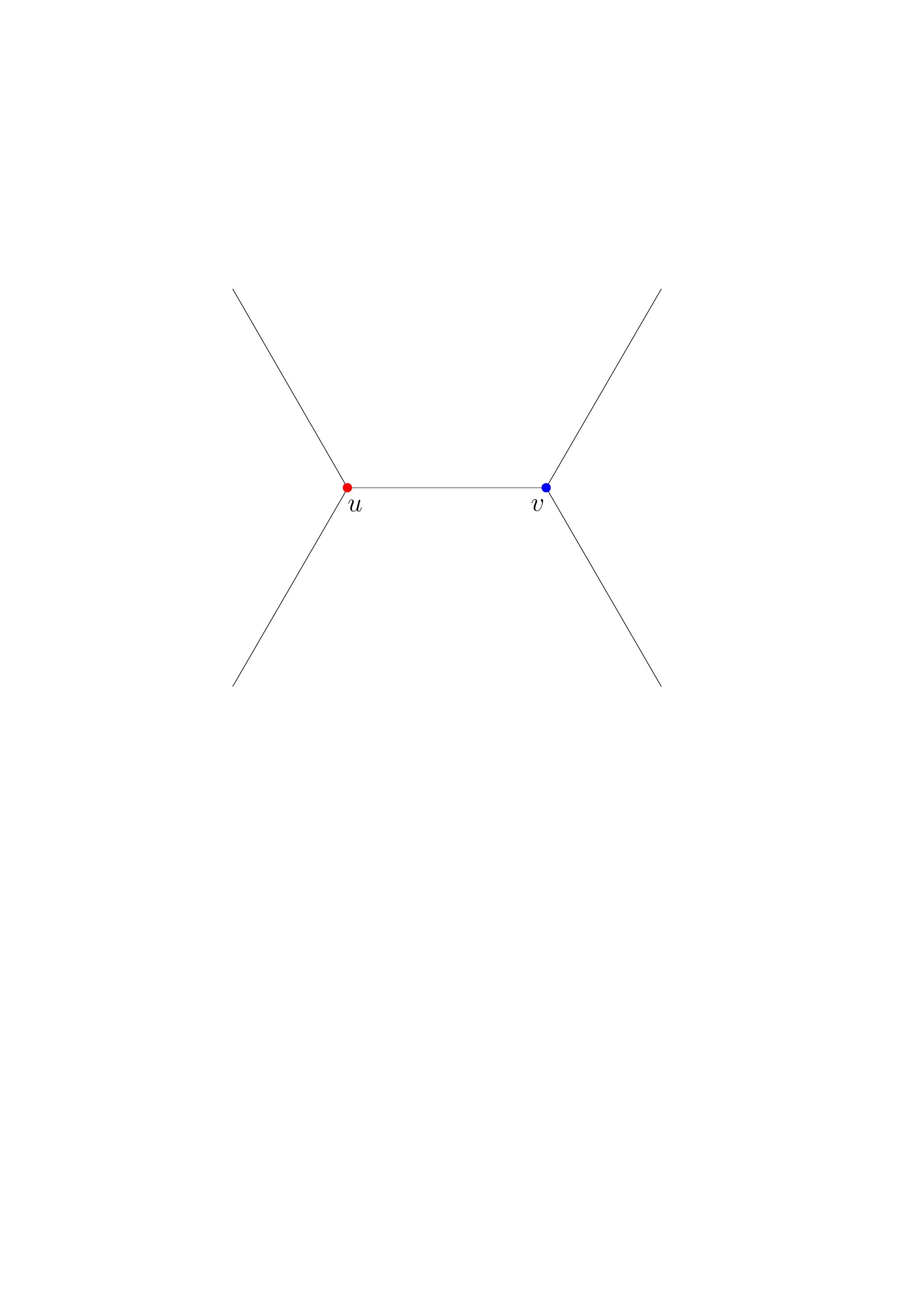}
		\label{fig:combined_claim_1}
		\subcaption{}
	\end{subfigure}
	\hfil
	\begin{subfigure}{0.45\textwidth}
		\includegraphics[width=\textwidth, page=2]{combined_claim.pdf}
		\label{fig:combined_claim_2}
		\subcaption{}
	\end{subfigure}
	\caption{An illustration of Claim~\ref{clm:combined_claim}. The first scenario depicted in the left sub-figure, and the second in the right.}
	\label{fig:combined_claim}
\end{figure}
}

\begin{figure}[ht]
	\centering
	\includegraphics[scale=0.5]{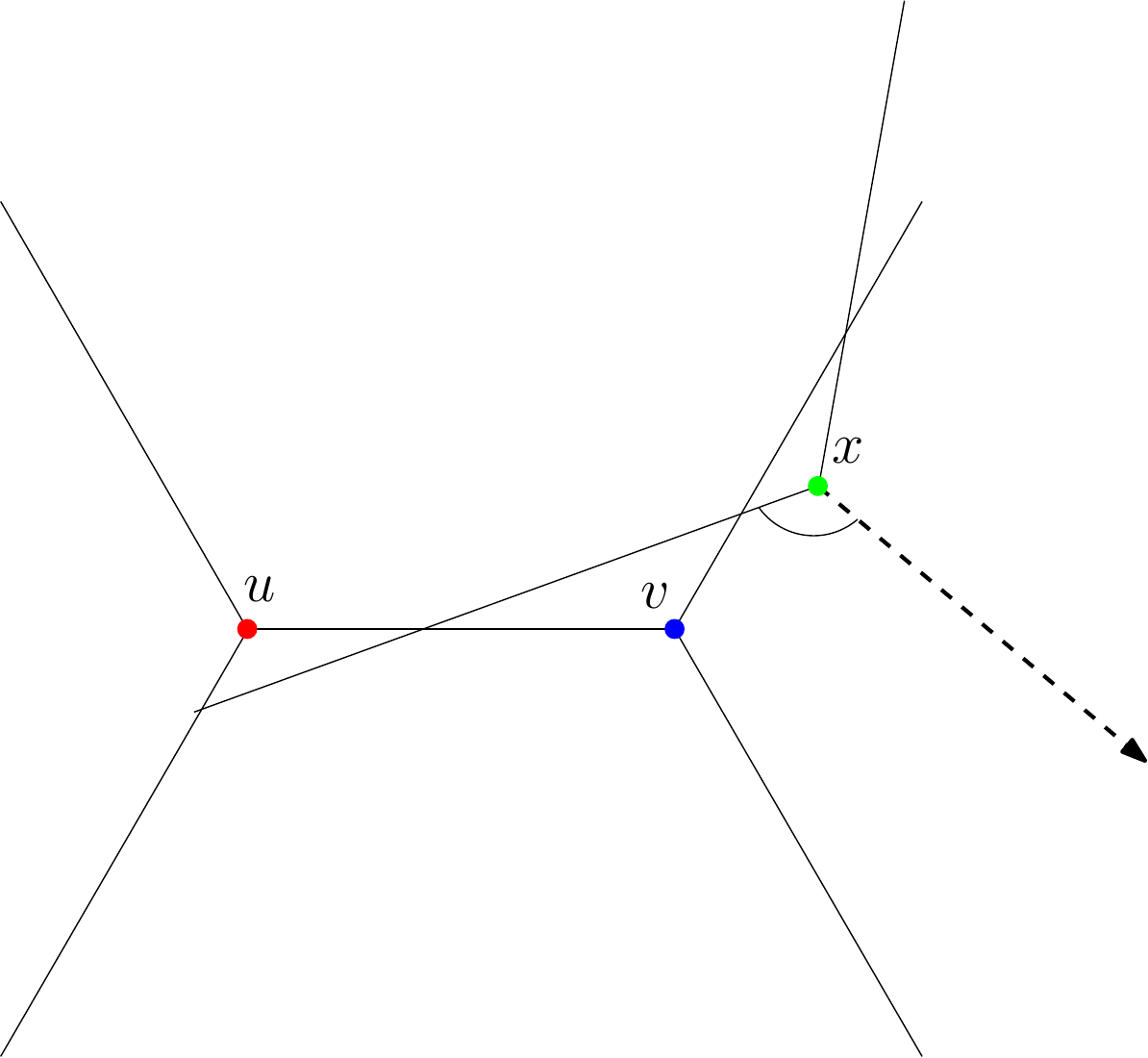}
	\caption{Proof of Claim~\ref{clm:23}: If $u\in\spt{x}{y}{1}$ but $v\notin\spt{x}{y}{1}$, then $y$, which is on the dashed ray emanating from $x$, is necessarily in $\spt{u}{v}{3}$.}
	\label{fig:claim23}
\end{figure}

\begin{claim}
\label{clm:23}
	Let $\{u,v\}$ and $\{x,y\}$ be two unordered pairs of points, such that $x$ lies in one of the side regions of $\upt{u}{v}$, say in the one adjacent to $v$, and $y$ lies in one of the center regions of $\upt{u}{v}$. Then, if $u$ lies in the side region adjacent to $x$, then so does $v$.
\end{claim}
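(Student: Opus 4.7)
The plan is to argue by contradiction, mirroring the style of the proof of Claim~\ref{clm:1324}. Label the side region of $\upt{x}{y}$ adjacent to $x$ as $\spt{x}{y}{1}$, and assume the hypothesis of the claim together with $u\in\spt{x}{y}{1}$ but $v\notin\spt{x}{y}{1}$; the goal is to derive that $y$ must lie in $\spt{u}{v}{3}$ (the side region of $\upt{u}{v}$ adjacent to $v$), which contradicts the assumption that $y$ lies in a center region of $\upt{u}{v}$.

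To carry this out, I would fix convenient coordinates: place $\overline{xy}$ horizontally with $x$ to the left of $y$, so that $\spt{x}{y}{1}$ is the wedge to the left of $x$ bounded by the two rays from $x$ making angle $\frac{\pi}{3}$ with the vertical, and the two rays from $y$ defining the top and bottom edges of $\spt{x}{y}{1}$'s upper and lower boundaries are determined symmetrically. Since $u\in\spt{x}{y}{1}$, the point $u$ sits in this left wedge, and since $x\in\spt{u}{v}{3}$ (the side region adjacent to $v$), the direction from $u$ through $x$ points into $\spt{u}{v}{3}$, which pins down an allowable angular range for the direction $\vec{uv}$: it must make an angle of at most $\frac{\pi}{3}$ with $\vec{ux}$.

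Next, I would use the assumption $v\notin\spt{x}{y}{1}$ to locate $v$. Because $u$ lies in $\spt{x}{y}{1}$ and $v$ does not, the segment $\overline{uv}$ must cross one of the boundary rays of $\spt{x}{y}{1}$, so $v$ lies either in one of the center regions of $\upt{x}{y}$ or in $\spt{x}{y}{3}$. In each subcase, the angular constraint from the previous paragraph, combined with the location of $v$, forces the boundary ray of $\pt{u}{v}$ separating $\spt{u}{v}{3}$ from the adjacent center region to rotate past the ray from $x$ toward $y$. Consequently $y$, which lies on that ray (at distance $|xy|$ from $x$), falls inside $\spt{u}{v}{3}$ — exactly the configuration depicted by the dashed ray in Figure~\ref{fig:claim23}. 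This contradicts $y$ lying in a center region of $\upt{u}{v}$ and completes the proof.

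The main obstacle I anticipate is the bookkeeping of the two partitions $\pt{x}{y}$ and $\pt{u}{v}$ simultaneously, and in particular showing uniformly (or case-by-case on whether $v$ lies in a center region or in $\spt{x}{y}{3}$) that the $\frac{\pi}{3}$-angle bounds of the two partitions line up so that the ray from $x$ through $y$ enters $\spt{u}{v}{3}$ before reaching $y$. As in the proof of Claim~\ref{clm:1324}, a single picture with a highlighted forbidden region should make the angle calculation transparent, so keeping the argument essentially one-line per case — as the authors do for Claim~\ref{clm:1324} — is the natural target.
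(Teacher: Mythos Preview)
Your overall plan matches the paper's proof: assume $u\in\spt{x}{y}{1}$ and $v\notin\spt{x}{y}{1}$, and derive the contradiction $y\in\spt{u}{v}{3}$. The one substantive difference is the choice of normalization. You fix $\overline{xy}$ horizontal, which pins down $\spt{x}{y}{1}$ but leaves the target region $\spt{u}{v}{3}$ floating with the direction of $\overline{uv}$; this is why you end up splitting on whether $v$ lands in a center region of $\upt{x}{y}$ or in $\spt{x}{y}{3}$. The paper instead fixes $\overline{uv}$ horizontal (with $u$ left of $v$ and $x$ not below that line). With that choice the goal region $\spt{u}{v}{3}$ is a fixed wedge to the right, and the single observation that the boundary of $\spt{x}{y}{1}$ crossed by $\overline{uv}$ must be the one separating $\spt{x}{y}{1}$ from $\spt{x}{y}{4}$ forces the ray from $x$ through $y$ to lie entirely in $\spt{u}{v}{3}$ --- no subcases needed. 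Your route works, but normalizing $\overline{uv}$ rather than $\overline{xy}$ collapses the bookkeeping you flagged as the main obstacle.
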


\begin{proof}
\old{
	Assume for a contradiction and w.l.o.g that $x\in\spt{u}{v}{3}$, $y$ is in one of the center regions of $\pt{u}{v}$, $u\in\spt{x}{y}{1}$, but $v\notin\spt{x}{y}{1}$. We further assume w.l.o.g that	the segment $\overline{uv}$ is horizontal, and that $x$ is not below the line containing it. See Figure~\ref{fig:claim23} for an illustration.
	Since we know that $u$ and $v$ are in different parts of $\pt{x}{y}$, we know that the ray emanating from $x$ that separates $\spt{x}{y}{1}$ in which $u$ is located, from one of the center regions crosses $\overline{uv}$.
	Due to the definition of the partition $\pt{x}{y}$, we have that $y$ is then located on the line that passes through $x$, and creates an angle of $\frac{2\pi}{3}$ with $\overline{uv}$ (shown as the dashed arrow in Figure~\ref{fig:claim23}). Due to the structure of the partition $\pt{u}{v}$, this means that $y$ must reside in $\spt{u}{v}{3}$ as well. A contradiction to the assumption that $y$ is in one of the center regions.
}
	Assume that $u\in\spt{x}{y}{1}$ but $v\notin\spt{x}{y}{1}$. We show that this implies that $y\in\spt{u}{v}{3}$ --- a contradiction.
	Indeed, assume, without loss of generality, that the segment $\overline{uv}$ is horizontal, with $u$ to the left of $v$, and that $x$ is not below the line containing $\overline{uv}$ (see Figure~\ref{fig:claim23}).
	Since $u$ and $v$ are in different regions of $\pt{x}{y}$, we know that the border between $\spt{x}{y}{1}$ (in which $u$ resides) and $\spt{x}{y}{4}$ (in which $v$ resides) crosses $\overline{uv}$. But, this implies that the dashed ray emanating from $x$ is contained in $\spt{u}{v}{3}$, so $y$, which is somewhere on this ray, is in $\spt{u}{v}{3}$.
\end{proof}

\begin{figure}[ht]
	\centering
	\includegraphics[scale=0.5]{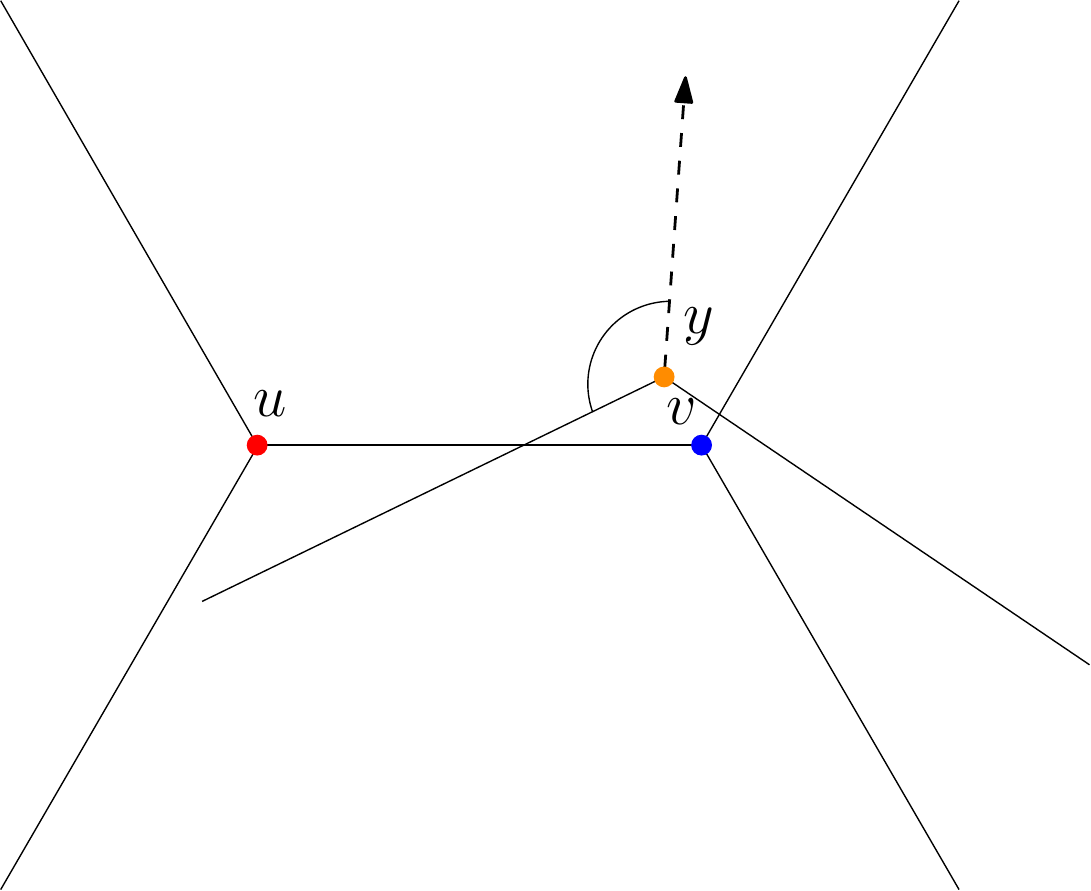}
	\caption{Proof of Claim~\ref{clm:32}:  If $v\in\spt{x}{y}{3}$ but $u\notin\spt{x}{y}{3}$, then $x$, which is on the dashed ray emanating from $y$, is necessarily in $\spt{u}{v}{2}$.}
	\label{fig:claim32}
\end{figure}

\begin{claim}
	\label{clm:32}
	Let $\{u,v\}$ and $\{x,y\}$ be two unordered pairs of points, such that $x$ lies in one of the side regions of $\mathcal{P}_{\{u,v\}}$, say in the one adjacent to $v$, and $y$ lies in one of the center regions of $\mathcal{P}_{\{u,v\}}$. Then, if $v$ lies in the side region adjacent to $y$, then so does $u$.
\end{claim}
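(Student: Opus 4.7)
The plan is to mirror the proof of Claim~\ref{clm:23}, with the roles of $x$ and $y$ interchanged. I argue by contradiction: assume $v\in\spt{x}{y}{3}$ (so $v$ is in the side region of $\pt{x}{y}$ adjacent to $y$) but $u\notin\spt{x}{y}{3}$, and aim to show that this forces $x\in\spt{u}{v}{2}$, contradicting the hypothesis that $x$ lies in the side region $\spt{u}{v}{3}$.

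As in the preceding proof, I would first normalize coordinates: place $\overline{uv}$ horizontally with $u$ to the left of $v$, and assume without loss of generality that $y\in\spt{u}{v}{2}$ (the upper center region of $\pt{u}{v}$). Since $u$ and $v$ lie in different regions of $\pt{x}{y}$, and $\spt{x}{y}{3}$ is the wedge of $\pt{x}{y}$ with apex at $y$, the boundary ray from $y$ that separates $\spt{x}{y}{3}$ from the adjacent region containing $u$ must cross the segment $\overline{uv}$.

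Next, I would translate this crossing condition into an angular constraint. The boundary rays of $\spt{x}{y}{3}$ at $y$ make fixed $\frac{\pi}{3}$ angles with the direction from $y$ to $x$, so requiring one of them to cross $\overline{uv}$ pins the direction from $y$ to $x$ tightly enough that the ray from $y$ through $x$ lies entirely inside $\spt{u}{v}{2}$, as illustrated in Figure~\ref{fig:claim32}. Since $x$ is on this ray, we conclude $x\in\spt{u}{v}{2}$, the desired contradiction.

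The main obstacle is verifying this final angular claim, but it is essentially the dual of the corresponding ``dashed ray'' step in the proof of Claim~\ref{clm:23}; once one recognizes the swap of $x$ and $y$ in the setup, the same elementary angle chasing carries through, now landing the ray in a center region of $\pt{u}{v}$ rather than a side region.
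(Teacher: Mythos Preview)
Your proposal is correct and matches the paper's own proof essentially step for step: the same contradiction setup ($v\in\spt{x}{y}{3}$, $u\notin\spt{x}{y}{3}$), the same normalization of $\overline{uv}$ with $y\in\spt{u}{v}{2}$, the same observation that the relevant boundary of $\spt{x}{y}{3}$ must cross $\overline{uv}$, and the same conclusion that the ray from $y$ through $x$ then lies in $\spt{u}{v}{2}$, forcing $x\in\spt{u}{v}{2}$. The only cosmetic difference is that the paper names the neighboring region explicitly as $\spt{x}{y}{4}$, whereas you leave it as ``the adjacent region containing $u$.''
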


\begin{proof}
\old{
	Assume for a contradiction and w.l.o.g that $x\in\spt{u}{v}{3}$, $y\in\spt{u}{v}{2}$, and $v\in\spt{x}{y}{3}$ but $u\notin\spt{x}{y}{3}$. We also assume that the segment $\overline{uv}$ is horizontal and that $u$ is to the left of $v$. We get that the boundary between $\spt{x}{y}{3}$ and one of the center regions of $\pt{x}{y}$ intersects $\overline{uv}$. See Figure~\ref{fig:claim32} for an illustration.
	
	Due to the definition of the partition $\pt{x}{y}$, we have that $x$ is then located on the line that passes through $y$, and creates an angle of $\frac{2\pi}{3}$ with $\overline{uv}$ (shown as the dashed arrow in Figure~\ref{fig:claim32}). Due to the structure of the partition $\pt{u}{v}$, this means that unless the boundary between $\spt{x}{y}{3}$ and $\spt{x}{y}{4}$ is parallel to $\overline{uv}$, $x$ cannot reside in $\spt{u}{v}{3}$. But, if the to lines are indeed parallel and intersecting as we have assumed, we have that they overlap, $u,v$ and $y$ are co-linear, and that either $y=v$ or $y\notin \spt{u}{v}{2}$ (since $y$ cannot be to the left of $v$ if $u,v$ and $y$ are collinear and $x\in\spt{u}{v}{3}$).
}
	Assume that $v\in\spt{x}{y}{3}$ but $u\notin\spt{x}{y}{3}$. We show that this implies that $x$ is in one of the center regions of $\pt{u}{v}$ --- a contradiction.
	Indeed, assume, without loss of generality, that the segment $\overline{uv}$ is horizontal, with $u$ to the left of $v$, and that $y \in \spt{u}{v}{2}$ (see Figure~\ref{fig:claim32}).
	Since $u$ and $v$ are in different regions of $\pt{x}{y}$, we know that the border between $\spt{x}{y}{3}$ (in which $v$ resides) and $\spt{x}{y}{4}$ (in which $u$ resides) crosses $\overline{uv}$. But, this implies that the dashed ray emanating from $y$ is contained in $\spt{u}{v}{2}$, so $x$, which is somewhere on this ray, is in $\spt{u}{v}{2}$.
\end{proof}

\section{Replacing an arbitrary path by a $\frac{2\pi}{3}$-tree}

Let $\{p_1,\ldots,p_n\}$ be a set of $n \ge 2$ points in the plane, and let $\Pi$ denote the polygonal path $(p_1,\ldots,p_n)$. The \emph{weight} of $\Pi$, $\omega(\Pi)$, is the sum of the lengths of the edges of $\Pi$, i.e., $\omega(\Pi) = \sum_{i=1}^{n-1} |p_ip_{i+1}|$. Let $X$ and $Y$ be the two natural matchings induced by $\Pi$, that is, $X = \{\{p_1,p_2\},\{p_3,p_4\},\ldots\}$ and $Y = \{\{p_2,p_3\},\{p_4,p_5\},\ldots\}$. Then, since $X \cap Y = \emptyset$, either $\omega(X)$ or $\omega(Y)$ is at most  $\omega(\Pi)/2$. Assume, without loss of generality, that  $\omega(X) \le \omega(\Pi)/2$. Moreover, assume for now that $n$ is even and that $X$ is a perfect matching.

In this section, we present an algorithm for replacing $\Pi$ by a $\frac{2\pi}{3}$-tree, ${\cal T}$, such that $\omega({\cal T}) \le 2\omega(\Pi)$ and, moreover, ${\cal T}$ is a 3-hop spanner of $\Pi$ (i.e., if there is an edge between $p$ and $q$ in $\Pi$, then there is a path consisting of at most three edges between $p$ and $q$ in ${\cal T}$).

Our algorithm assigns to each of the vertices $p$ of $\Pi$ an orientation, which is one of the three basic orientations of $p$ with respect to the vertex $q$ matched to $p$ in $X$.

In the subsequent description, we think of $X$ as a sequence (rather than a set) of edges.
Our algorithm consists of three phases.

\subsection{Phase I}

In the first phase of the algorithm, we iterate over the edges of $X$. When reaching the edge $\{p_i,p_{i+1}\}$, we examine it with respect to both its previous edge $\{p_{i-2},p_{i-1}\}$ and its next edge $\{p_{i+2},p_{i+3}\}$ in $X$. (The first edge is only examined w.r.t. its next edge, and the last edge is only examined w.r.t. its previous edge.) During the process, we either assign an orientation to one of $p_i,p_{i+1}$, to both of them, or to neither of them. In this phase, we only assign center orientations, i.e., $u_v^c$ or $v_u^c$, where $\{u,v\}$ is an edge in $X$.

\begin{figure}[ht]
\centering
\begin{subfigure}{0.45\textwidth}
	\includegraphics[width=\textwidth, page=4]{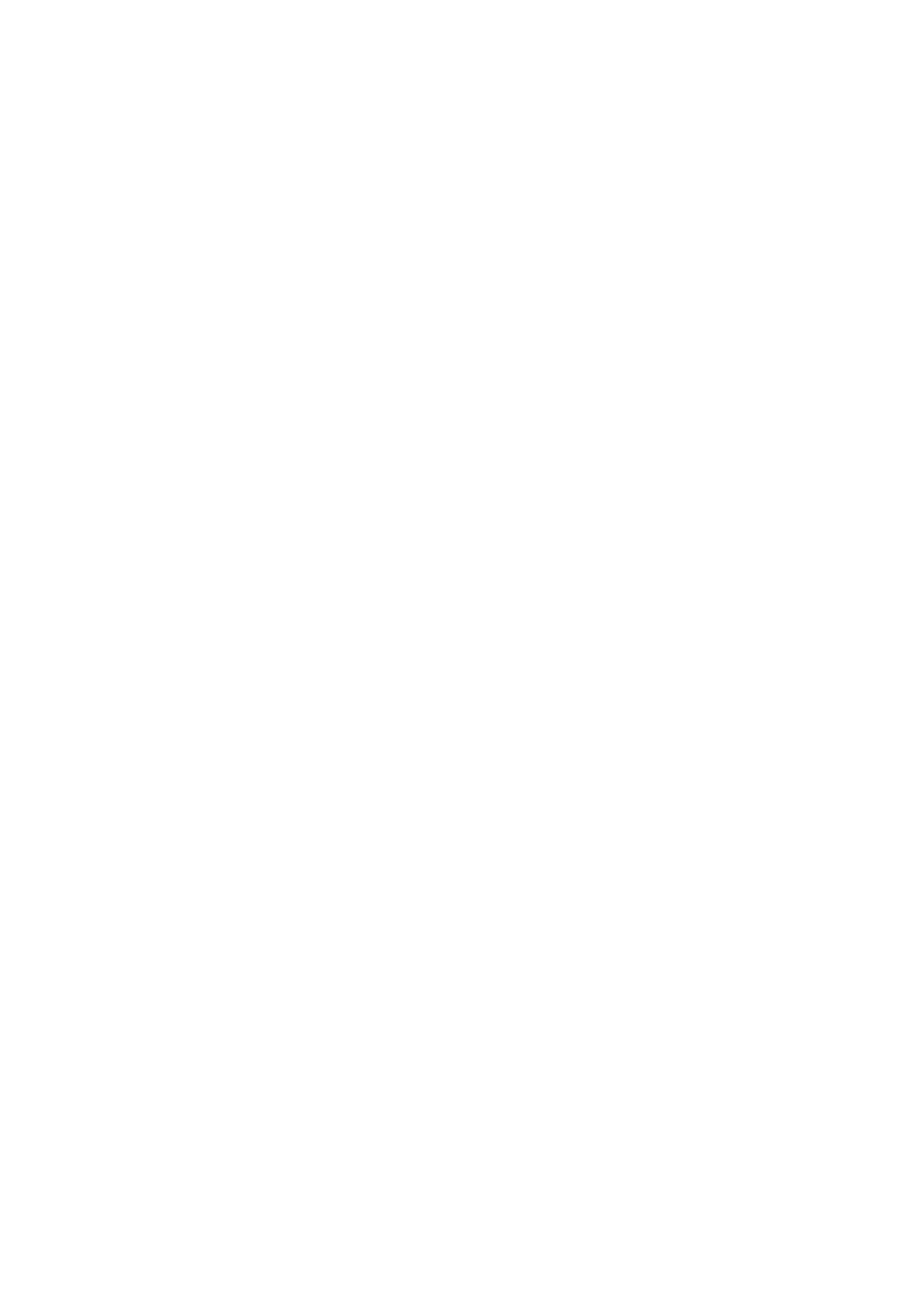}
	\label{fig:first phase 1}
\end{subfigure}
\hfil
\begin{subfigure}{0.45\textwidth}
	\includegraphics[width=\textwidth, page=2]{constraints.pdf}
	\label{fig:first phase 2}
\end{subfigure}
\caption{The conditions by which we assign $u$ the orientation $u_v^c$ due to $f=\{x,y\}$. {\bf Left:} $u$'s orientation is determined by the first condition. {\bf Right:} $u$'s orientation is determined by the second condition.}
\label{fig:first phase}
\end{figure}

Let $e = \{u,v\}$ be the edge that is being considered and let $f = \{x,y\}$ be one of its (at most) two neighboring edges.
We assign $u$ the orientation $u_v^c$ due to $f$ if one of the following conditions holds:
\begin{enumerate}
\item
One of $f$'s vertices is in $v$'s region (i.e., in the side region adjacent to $v$) and $u$ is in the region of the other vertex of $f$; see Figure~\ref{fig:first phase}~(left). 
\item
Both $x$ and $y$ are in $v$'s region; see Figure~\ref{fig:first phase}~(right). 
\end{enumerate}
Notice that it is possible that both conditions hold; see Figure~\ref{fig:first phase 3}. We say that $u$'s orientation was \emph{determined} by the second condition, only if the first condition does not hold; otherwise, we say that $u$'s orientation was \emph{determined} by the first condition.

Similarly, we assign $v$ the orientation $v_u^c$ due to $f$ if one of the conditions above holds, when $u$ is replaced by $v$.

\begin{figure}
\centering	
\includegraphics[width=0.6\textwidth, page=3]{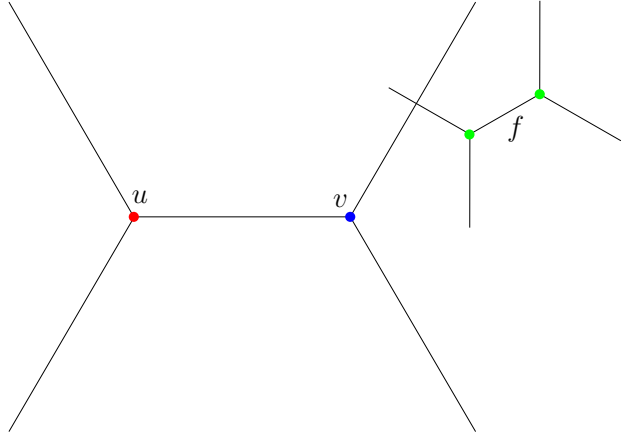}
\caption{Both conditions hold, but we say that $u$'s orientation is determined by the first condition.}
\label{fig:first phase 3}
\end{figure}

The following series of claims deals with the outcome of examining an edge $e$ with respect to a neighboring edge $f$.

\begin{claim}
	\label{clm:firstPhaseAtMostOne}
	The orientation of at most one of the vertices of edge $e=\{u,v\}$ is determined, when $e$ is examined with respect to a neighboring edge $f=\{x,y\}$. 
\end{claim}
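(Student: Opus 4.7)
The plan is to do a small case analysis on which condition assigns $u$ an orientation and show that in each case no condition can simultaneously assign an orientation to $v$. Throughout, write $S_u = \spt{u}{v}{3}$ and $S_v = \spt{u}{v}{1}$ for the side regions of $\upt{u}{v}$ adjacent to $u$ and to $v$, respectively, and similarly $S_x, S_y$ for the side regions of $\upt{x}{y}$.

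First I would dispose of the easy case in which $u$'s orientation is determined by the second condition: both $x$ and $y$ lie in $S_v$. Then neither $x$ nor $y$ lies in $S_u$, so neither of the two conditions for $v$ can hold (the first needs some vertex of $f$ in $S_u$, and the second needs both of them there). Hence $v$ is not assigned any orientation.

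Next I would handle the case in which $u$'s orientation is determined by the first condition. Then, up to relabeling, $x \in S_v$ and $u$ lies in the side region of $\upt{x}{y}$ adjacent to $y$. For $v$ to be simultaneously assigned $v_u^c$, note that the second condition for $v$ requires both $x$ and $y$ in $S_u$, which is ruled out by $x \in S_v$; hence only the first condition for $v$ could apply. That condition requires one of $x,y$ to be in $S_u$ and $v$ to be in the side region of $\upt{x}{y}$ adjacent to the other vertex of $f$. Since $x \in S_v$, the only possibility is $y \in S_u$ and $v$ in the side of $\upt{x}{y}$ adjacent to $x$. In particular, $u$ lies in a side region of $\upt{x}{y}$.

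This is exactly the configuration ruled out by Claim~\ref{clm:1324}: with $x$ and $y$ in opposite side regions of $\upt{u}{v}$, both $u$ and $v$ must lie in center regions of $\upt{x}{y}$, contradicting the conclusion that $u$ is in a side region of $\upt{x}{y}$. Thus $v$ cannot be assigned an orientation, completing the proof. The only nontrivial step is recognizing the applicability of Claim~\ref{clm:1324} in the first-condition case; the second-condition case is purely a containment check.
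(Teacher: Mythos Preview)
Your proof is correct and follows essentially the same approach as the paper's: a short case split on which condition (first or second) is responsible, with the second-condition case handled by the trivial observation that both vertices of $f$ lie in the wrong side region, and the first-condition case reduced to a contradiction via Claim~\ref{clm:1324}. The only cosmetic difference is that the paper assumes both $u$ and $v$ are oriented and splits on ``at least one by the second condition'' versus ``both by the first,'' whereas you fix $u$ and check both possibilities for $v$; the underlying logic and the key invocation of Claim~\ref{clm:1324} are identical.
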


\begin{proof}
	Assume that both $u$ and $v$ were oriented due to $f$ and consider the conditions responsible for it, so as to reach a contradiction. If the orientation of one of the vertices, say $u$, was determined by the second condition, then neither of the conditions can apply to $v$, since both conditions require that at least one of $f$'s vertices is in $u$'s region. If, however, the orientation of both $u$ and $v$ was determined by the first condition, then, without loss of generality, $x$ is in $u$'s region and $y$ is in $v$'s region, and by Claim~\ref{clm:1324} we conclude that $u$ and $v$ are in the center regions of $\pt{x}{y}$, implying that neither of the vertices of $e$ was oriented due to $f$.
\end{proof} 

\begin{claim}
	\label{clm:firstPhaseFirstRuleSymmetry}
	If the orientation of a vertex of edge $e=\{u,v\}$ is determined by the first condition, when $e$ is examined with respect to a neighboring edge $f=\{x,y\}$, then the orientation of a vertex of $f$ is determined by the first condition, when $f$ is examined with respect to $e$, and these two vertices induce an edge of the transmission graph.
\end{claim}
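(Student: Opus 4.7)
The plan is to unpack what ``determined by the first condition'' means symbolically, verify that the symmetric version of the first condition fires for $f$ with respect to $e$, and then check that the two determined vertices lie mutually inside each other's transmission cones.

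Without loss of generality, assume $u$ receives the orientation $u_v^c$ due to $f=\{x,y\}$ via the first condition. After possibly relabelling $x,y$, I read the first condition as: $x\in\spt{u}{v}{3}$ (so one vertex of $f$ is in $v$'s region of $\upt{u}{v}$) and $u\in\spt{x}{y}{3}$ (so $u$ lies in the side region of $\upt{x}{y}$ adjacent to $y$, which is exactly ``the region of the other vertex of $f$''). First I would draw out this configuration and read off the two transmission cones it will force.

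Next I would examine $f$ with respect to $e$, with the roles of the two edges swapped. The first condition for $x$ to receive the orientation $x_y^c$ requires (a) one of $e$'s vertices to sit in $y$'s region of $\upt{x}{y}$, and (b) $x$ itself to sit in the region of the other vertex of $e$ inside $\upt{u}{v}$. Both requirements are already in hand from the previous paragraph: $u\in\spt{x}{y}{3}$ supplies (a), and $x\in\spt{u}{v}{3}$ supplies (b). Hence the first condition fires for $x$, and $x$ is assigned $x_y^c$ by it. (Even if the second condition happened to hold simultaneously, the convention of the paper still attributes the orientation to the first condition.)

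It remains to argue that $\{u,x\}$ is an edge of the transmission graph. Under $u_v^c$, the transmission cone of $u$ entirely contains $\spt{u}{v}{3}$, hence contains $x$; symmetrically, under $x_y^c$ the transmission cone of $x$ entirely contains $\spt{x}{y}{3}$, hence contains $u$. So each of $u,x$ sees the other in its cone, and $\{u,x\}$ lies in the transmission graph, which closes the proof. The only real subtlety I anticipate is nailing down the interpretation of ``$u$ is in the region of the other vertex of $f$'', but once that is read as ``$u$ lies in the side region of $\upt{x}{y}$ adjacent to $y$'', the argument reduces to a direct matching of cone-containment conditions and does not invoke any of the earlier claims such as Claim~\ref{clm:1324}.
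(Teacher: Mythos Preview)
Your proof is correct and follows essentially the same line as the paper's own argument: you unpack the first condition as $x\in\spt{u}{v}{3}$ and $u\in\spt{x}{y}{3}$, observe that these two facts are exactly what is needed for the first condition to fire for $x$ when $f$ is examined with respect to $e$, and then verify mutual cone containment from the definition of the center orientation. The paper phrases the same steps slightly more informally (``$x$ is in $v$'s region'' and ``$u$ is in $y$'s region'') but the logic is identical, and like you it does not invoke Claim~\ref{clm:1324}.
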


\begin{proof}
	Assume that, e.g., $u$'s orientation is determined by the first condition (i.e., $u$ is assigned the orientation $u_v^c$), when $e$ is examined with respect to $f$. This means that there is a vertex of $f$, say $x$, that is in $v$'s region, and that $u$ is in $y$'s region.
	Now, when we proceed to examine the edge $f$ with respect to $e$, we find that $u$ is in $y$'s region and $x$ is in $v$'s region, so by the first condition we assign $x$ the orientation $x_y^c$.
	
	It remains to show that $u$ and $x$ induce and edge of the transmission graph. Indeed, $x$ is in the transmission cone of $u$, since $x$ is in $v$'s region and $u$'s cone contains $v$'s region. Similarly, $u$ is in the transmission cone of $x$, since $u$ is in $y$'s region and $x$'s cone contains $y$'s region. 
\end{proof}

\begin{claim}
	If the orientation of a vertex of edge $e=\{u,v\}$ is determined by the second condition, when $e$ is examined with respect to a neighboring edge $f=\{x,y\}$, then neither of $f$'s vertices is assigned an orientation due to $e$. 
\end{claim}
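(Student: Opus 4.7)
The plan is to carefully unpack what ``determined by the second condition'' means, and then verify by case analysis that neither of the two conditions can fire for $x$ or for $y$ when $f$ is examined with respect to $e$.

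First, I would record the structural consequences of the hypothesis. Without loss of generality, say $u$ is the vertex of $e$ whose orientation is determined by the second condition due to $f$. By definition this means both $x$ and $y$ lie in $v$'s side region of $\pt{u}{v}$. Moreover, the qualifier ``determined by the second condition, only if the first condition does not hold'' forces us to negate the first condition: there is no way to designate one of $\{x,y\}$ as lying in $v$'s region with $u$ simultaneously lying in the region of the other. Since both $x$ and $y$ already lie in $v$'s region, this amounts to saying that $u$ lies in neither $x$'s nor $y$'s side region of $\pt{x}{y}$, i.e.\ $u$ lies in one of the center regions of $\pt{x}{y}$.

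Next, I would examine $f$ with respect to $e$, and rule out each of the two conditions for each of $x$ and $y$. Consider $x$ first. The second condition would require both $u$ and $v$ to lie in $y$'s side region of $\pt{x}{y}$, but we just established that $u$ lies in a center region of $\pt{x}{y}$, so this fails immediately. The first condition splits into two sub-cases: either (a) $u \in y$'s region and $x \in v$'s region, or (b) $v \in y$'s region and $x \in u$'s region. Case (a) is impossible because $u$ is in a center region of $\pt{x}{y}$; case (b) is impossible because the hypothesis places $x$ in $v$'s region of $\pt{u}{v}$, hence not in $u$'s region. So no condition fires at $x$. The argument for $y$ is exactly symmetric, with $y$'s hypothesised position ($y \in v$'s region) playing the analogous role.

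Since neither condition triggers at either endpoint of $f$, the claim follows. There is no real obstacle here once the negation of the first condition is translated into the geometric statement ``$u$ is in a center region of $\pt{x}{y}$''; the rest is a short case check that does not require Claims~\ref{clm:1324}, \ref{clm:23}, or \ref{clm:32}.
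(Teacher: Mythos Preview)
Your proof is correct and follows essentially the same line as the paper's: both first deduce from the negation of the first condition that $u$ lies in a center region of $\upt{x}{y}$, which immediately kills the second condition when $f$ is examined with respect to $e$. The only difference is in dispatching the first condition: the paper invokes Claim~\ref{clm:firstPhaseFirstRuleSymmetry} (if a vertex of $f$ were determined by the first condition due to $e$, then symmetrically a vertex of $e$ would be determined by the first condition due to $f$, contradicting the hypothesis), whereas you unwind the same contradiction by a direct two-case check using ``$u$ is in a center region of $\upt{x}{y}$'' and ``$x,y$ are in $v$'s region of $\upt{u}{v}$''. Your version is slightly more self-contained; the paper's is slightly terser by reusing the earlier claim.
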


\begin{proof}
	If the orientation of, e.g., $u$ is determined by the second condition, when $e$ is examined with respect to $f$, then $u$ is in one of the center regions of $\upt{x}{y}$. Therefore, when $f$ is examined with respect to $e$, the only condition that may hold is the first one. But if it does, then by Claim~\ref{clm:firstPhaseFirstRuleSymmetry}, the orientation of $u$ is determined by the first condition, contrary to our assumption. We conclude that if the orientation of a vertex of $e$ is determined by the second condition, then neither of $f$'s vertices is assigned an orientation due to $e$.
\end{proof}

\subsection{Phase II}
\label{sec:phase2}

After completing the first phase, in which we iterated over the edges of $X$ only once (i.e., a single round), we proceed to the second phase, in which we iterate over the edges of $X$ again and again (i.e., multiple rounds). The second phase ends only after a full round is completed, in which no vertex is assigned an orientation. 
In a single round, we iterate over the edges of $X$, and for each pair of consecutive edges $e=\{u,v\}$ and $f=\{x,y\}$, where $e$ precedes $f$, we assign orientations to the vertices of $e$ and $f$, subject to the four rules listed below.  

\begin{description}
	
	\item{\bf No reorienting:}
	The orientation of a vertex is unmodifiable; that is, once the orientation of a vertex has been fixed (possibly already in the first phase), it cannot be changed. 
    
    \item{\bf Center orientation:}
    A non-center orientation to a vertex $u$ of an edge $e$ is allowed, only if $u$ is the second vertex of $e$ to be assigned an orientation. Thus, if $u$ is the first vertex of $e$ to be assigned an orientation, then $u$ must be assigned a center orientation.
  
    \item{\bf Edge creation:}
    Every operation that is performed must result in the creation of an edge of the transmission graph. This is achieved either by assigning orientations to two vertices simultaneously, or by orienting a vertex towards an already oriented vertex.
    
    \item{\bf No double tapping:}
    If one of $e$'s vertices was already oriented due to $f$, where $f$ is one of $e$'s neighboring edges, then the other vertex of $e$ will not be oriented due to $f$. 
    
\end{description}

Notice that in this phase, unlike the previous one, the orientation decisions that we make when examining an edge $e$ with respect to the next edge $f$, also depend on the orientations that some of the vertices of these edges may already have, and not only on the relative positions of these vertices.

At this point, we could have proceeded directly to the third phase, since for the purpose of correctness we do not need to elaborate on the types of operations that are performed in the second phase. However, for the sake of clarity, we illustrate below several types of operations that are performed during the second phase.

\begin{figure}[ht]
	\centering
	\begin{subfigure}[Left]{0.48\textwidth}
		\includegraphics[width = \textwidth, page = 3]{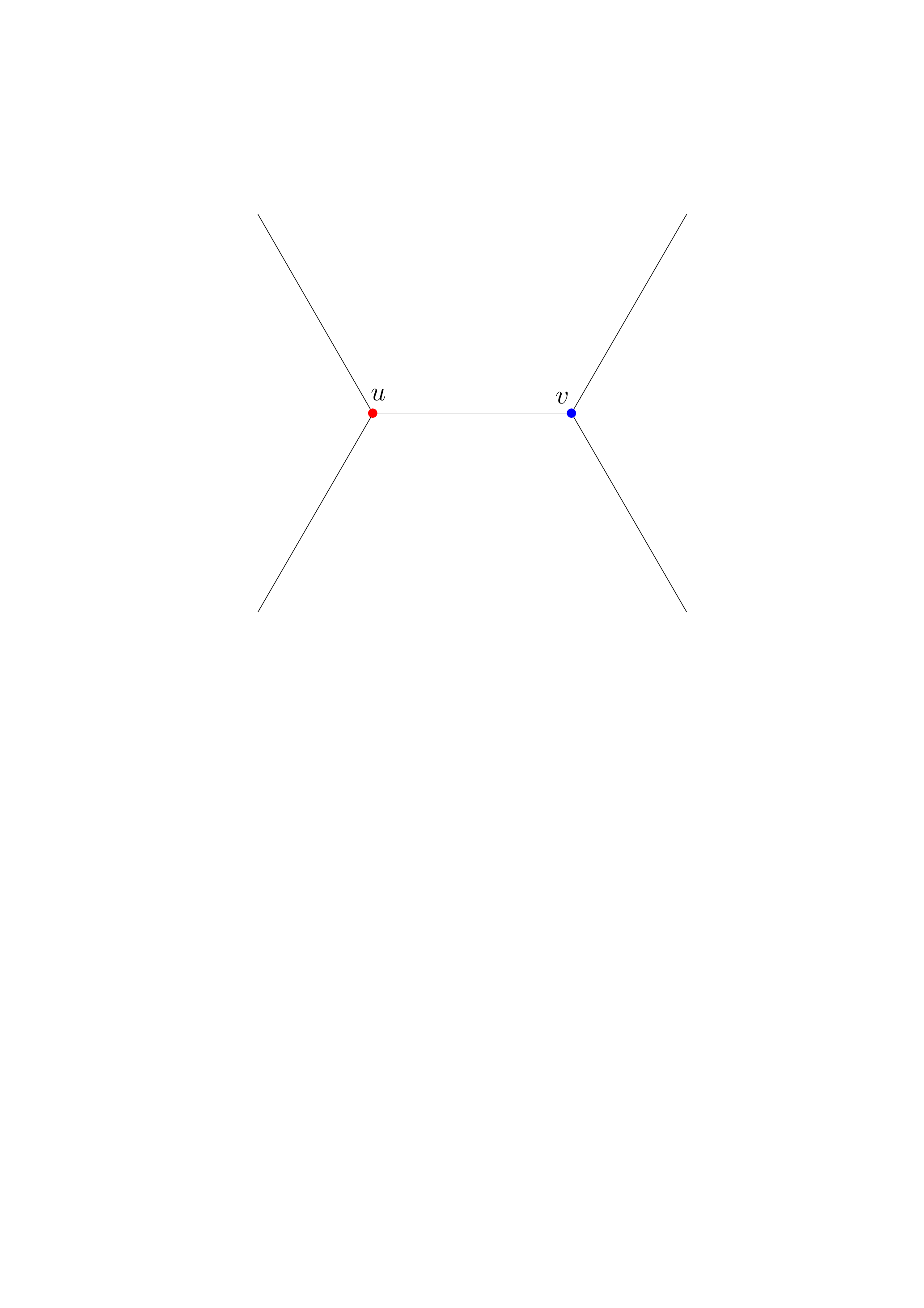}
		\label{fig:secondPhase1}
	\end{subfigure}
	\hfil
	\begin{subfigure}[Right]{0.48\textwidth}
		\includegraphics[width = \textwidth, page = 4]{secondPhase}
		\label{fig:secondPhase2}
	\end{subfigure} 
	\caption{{\bf Left:} If $y$ was already oriented, $u$ and $x$ can be oriented towards each other, assigning $x$ a non-center orientation (first scenario). If $u$ was also already oriented, $x$ can be oriented towards $u$ (second scenario). {\bf Right:} If both $v$ and $y$ were already oriented, $u$ and $x$ can be oriented towards each other, assigning them non-center orientations (third scenario).}
	\label{fig:secondPhase} 
\end{figure}

\begin{itemize}
	\item
	$u$ is assigned the orientation $u_v^c$ and $x$ is assigned the orientation $x_y^u$ or $x_y^d$, to establish the edge $\{u,x\}$ of the transmission graph; see Figure~\ref{fig:secondPhase}~(left).\\ Precondition: $y$ is already oriented.
	
	\item
	$x$ is assigned the orientation $x_y^u$ or $x_y^d$, to establish the edge $\{u,x\}$ of the transmission graph, where $u$ was previously assigned the orientation $u_v^c$; see Figure~\ref{fig:secondPhase}~(left).\\
	Precondition: $y$ is already oriented.
	
	\item
	$u$ is assigned the orientation $u_v^u$ or $u_v^d$ and $x$ is assigned the orientation $x_y^u$ or $x_y^d$, to establish the edge $\{u,x\}$ of the transmission graph; see Figure~\ref{fig:secondPhase}~(right).\\
	Precondition: $v$ and $y$ are already oriented.
\end{itemize}

\old{ 
\begin{figure}[ht]
	\centering
	\includegraphics[scale= 0.7, page = 8]{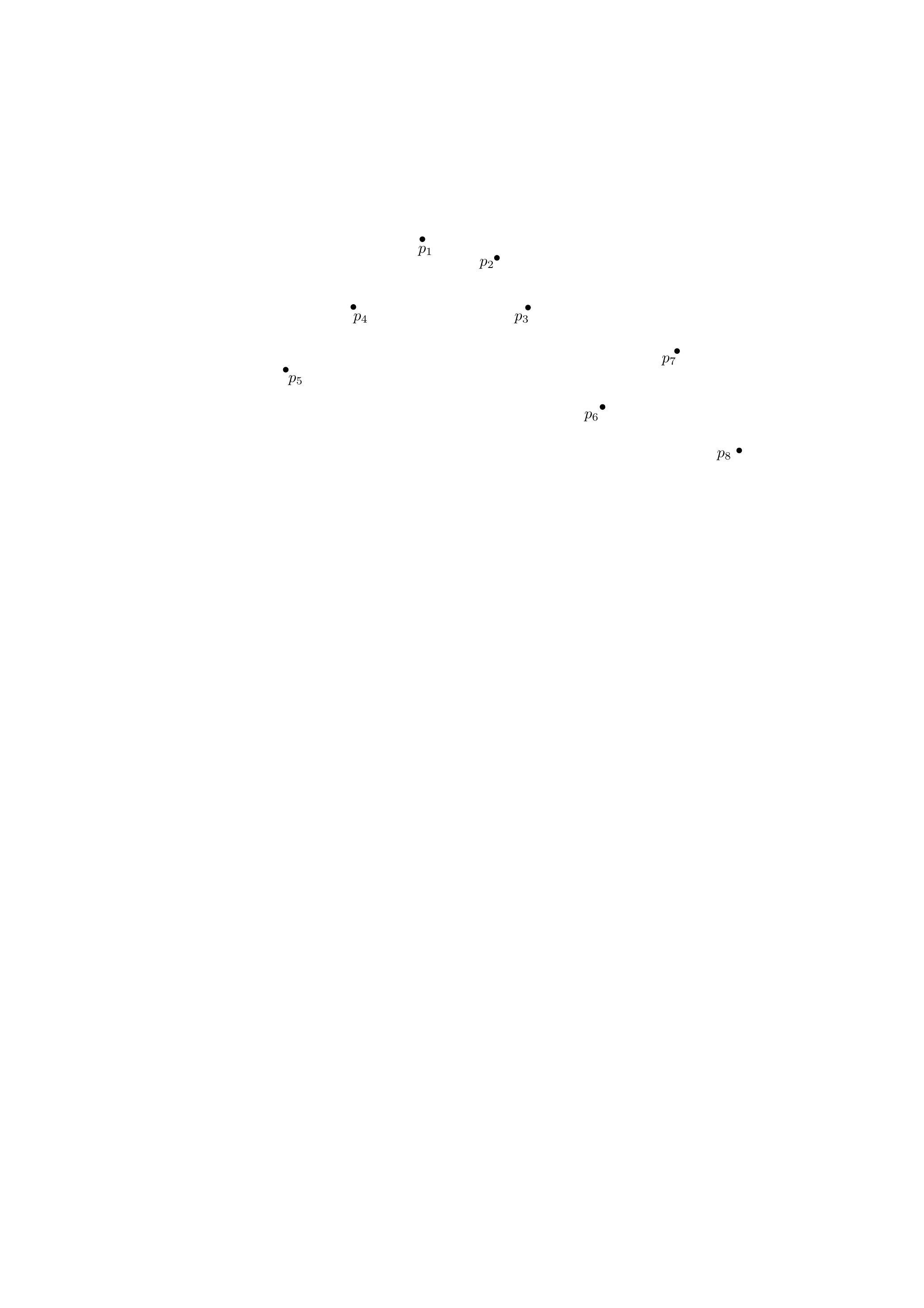}
	\caption{In the first phase, the vertex $p_5$ is assigned a center orientation (depicted by the red triangle) due to the edge $\{p_7, p_8\}$ (second condition). In the first round of the second phase, $p_4$ and $p_6$ are oriented towards each other (the blue triangles), and in the second round of the second phase $p_1$ and $p_3$ are oriented towards each other (the green triangles).}
	\label{fig:secondPhaseExample} 
\end{figure}
}

In Figure~\ref{fig:complete_example}(a-b) one can see the result of applying the first two phases to the path $(p_1,\ldots,p_8)$, i.e., to the sequence of edges $X = (\{p_1,p_2\},\{p_3,p_4\},\{p_5,p_6\},\{p_7,p_8\})$.  

\subsection{Phase III}

In this phase we perform one final round, in which we orient all the vertices that were not yet oriented. More precisely, we iterate over the edges of $X$, considering each edge $e$ with respect to the next edge $f$. When considering $e$, we orient its vertices that were not yet oriented, so that once we are done with $e$, both $e$ itself and an edge connecting $e$ and $f$ are present in the transmission graph that is being constructed.      

When considering the edge $e=\{u,v\}$ with respect to the next edge $f=\{x,y\}$, we know (by induction) that there already exists a transmission edge connecting $e$ and the previous edge, so at most one of $e$'s vertices was not yet oriented. If both vertices of $e$ were already oriented, then either there already exists a transmission edge connecting $e$ and $f$, or not. In the former case, proceed to the next edge of $X$ (i.e., to $f$), and in the latter case, orient a vertex of $f$ that was not yet oriented (there must be such a vertex), to obtain a transmission edge between $e$ and $f$. We prove below that this is always possible.

If only one of $e$'s vertices was already oriented, then let, e.g., $u$ be the one that is not yet oriented. 
Now, if there already exists a transmission edge connecting $e$ and $f$ (i.e., $v$ is connected to both the previous and the next edge of $e$), then assign $u$ the orientation $u_v^c$ (ensuring that $e$ is a transmission edge). Otherwise, if one can assign an orientation to $u$, so that a transmission edge is created between $u$ and an already oriented vertex of $f$, then do so. If this is impossible, then orient $u$ and a vertex of $f$ that was not yet oriented (there must be such a vertex), to obtain a transmission edge between $e$ and $f$. We prove below that this is always possible.

\old{
\begin{figure}[ht]
	\centering
	\includegraphics[scale= 0.7, page = 10]{secondPhaseExample}
	\caption{In the third phase, $p_2$, $p_7$, and $p_8$ are assigned orientations (the yellow triangles.). Notice that $p_8$ is oriented to obtain the transmission edge $\{p_5,p_8\}$.}
	\label{fig:thirdPhaseExample}
\end{figure}
}

In Figure~\ref{fig:complete_example}(c) one can see the result of applying the third phase to the sequence of edges $X = (\{p_1,p_2\},\{p_3,p_4\},\{p_5,p_6\},\{p_7,p_8\})$ (following the application of the first and second phases).  

\subsection{Correctness}

We first consider the more interesting case, where (i) one of the vertices of $e$, say $u$, is not yet oriented, (ii) there is no transmission edge between $e$ and $f$, and (iii) it is impossible to orient $u$ so that a transmission edge is created between $u$ and an already oriented vertex of $f$. In this case, we need to prove that at least one of $f$'s vertices is not yet oriented and that it is possible to orient both $u$ and such a vertex of $f$ to obtain a transmission edge between $e$ and $f$.

We begin by showing the if both of $f$'s vertices were already oriented, then either assumption (ii) or assumption (iii) does not hold. 
Indeed, by Claim~\ref{clm:firstPhaseAtMostOne} and the \emph{No double tapping} rule of the second phase, one $f$'s vertices, say $x$, was oriented due to $e$. Now, if $x$ was oriented during the first phase, then we distinguish between two cases according to the condition by which the orientation of $x$ was determined.
\begin{description}
	\item{\textbf{$x$'s orientation was determined by the first condition.}} In this case, by Claim~\ref{clm:firstPhaseFirstRuleSymmetry}, the edge $\{v,x\}$ is already in the transmission graph. In more detail, since $u$ is not yet oriented, we must have that $x\in\spt{u}{v}{1}$ and $v\in\spt{x}{y}{3}$.
			
	\item{\textbf{$x$'s orientation was determined by the second condition.}} In this case, both $u$ and $v$ are in $y$'s region and $x$ is in one of the center regions of $\upt{u}{v}$. So, by orienting $u$ appropriately, one can obtain the transmission edge $\{u,x\}$.
			
\end{description}
If, however, $x$ was oriented during the second phase, then by the \emph{Edge creation} rule, an edge connecting $e$ and $f$ was already created.

We thus conclude that at least one of $f$'s vertices is not yet oriented. We now consider, separately, the case where only one of $f$'s vertices is not yet oriented and the case where both vertices of $f$ are not yet oriented.

\noindent
{\bf Only one of $f$'s vertices is not yet oriented.}
Assume, without loss of generality, that $y$ is the vertex of $f$ that is already oriented. If $y$ was oriented due to $e$, then by replacing $x$ with $y$ in the proof above, we get that either assumption (ii) or assumption (iii) does not hold. Therefore, we assume that $y$ was oriented due to the edge following $f$, which implies that $y$ was oriented in the first or second phase. Now, if $u$ and $x$ can be oriented to obtain the transmission edge $\{u,x\}$, then we are done. Otherwise, $u \in \spt{x}{y}{1}$ or $x \in \spt{u}{v}{1}$. We consider these cases below and show, for both of them, that a transmission edge between $e$ and $f$ can still be created.
\begin{description}
	\item{$\mathbf{u\in\spt{x}{y}{1}}$:} Notice that since $x$ is not yet oriented and $y$ was oriented in the first or second phase, $y$'s orientation is necessarily $y_x^c$. We consider each of the possible locations of $v$ in $\pt{x}{y}$, and show that regardless of $v$'s location a transmission edge can be created.
	\begin{enumerate}
		\item If $v\in\spt{x}{y}{1}$, then $y$ was oriented due to $e$ during the first phase --- contradiction.
			
		\item If $v\in\spt{x}{y}{3}$, then, by Claim~\ref{clm:1324}, $x$ and $y$ are in the center regions of $\pt{u}{v}$, which allows us to orient $u$ towards $y$ to create the transmission edge $\{u,y\}$.
			
		\item If $v$ is in one of the center regions of $\pt{x}{y}$, then we apply Claim~\ref{clm:23} to show that we can orient $x$ towards $v$ to create the transmission edge $\{v,x\}$. Indeed,  since (by assumption~(iii)) we cannot orient $u$ to create the transmission edge $\{u,y\}$, we know that $y \in\spt{u}{v}{1}$. So by Claim~\ref{clm:23}, we get that $x \in \spt{u}{v}{1}$. Therefore, since both $x$ and $y$ are in $u$'s region, $v$'s orientation was determined by the second condition during the first phase, and $x$ can be oriented towards $v$ to create the transmission edge $\{v,x\}$.
	\end{enumerate}

	\item{$\mathbf{x\in\spt{u}{v}{1}}$:} 
	We first observe that if it is possible to create a transmission edge between $v$ and $x$ (i.e., $v \not \in \spt{x}{y}{1}$), then it is possible to do so by assigning $v$ a center orientation (since $x\in\spt{u}{v}{1}$), and we would have created the edge $\{v,x\}$ (by assigning $v$ a center orientation and $x$ an appropriate orientation) in the second phase, as $y$ was oriented in the first or second phase. We assume therefore that it is impossible to create a transmission edge between $v$ and $x$, which implies that $v \in \spt{x}{y}{1}$.
	
	We now show that regardless of the location of $y$ in $\pt{u}{v}$, we get that $v \not \in \spt{x}{y}{1}$ --- contradiction. 
	\begin{enumerate}
		\item If $y\in\spt{u}{v}{3}$, then, by Claim~\ref{clm:1324}, $v$ is in a center region of $\pt{x}{y}$.
			
		\item If $y\in\spt{u}{v}{1}$, 
		then an edge between $v$ and $y$ was created in the first phase (i.e., the orientations of both $v$ and $y$ were determined by the first condition of the first phase).
			
		\item If $y$ is in one of the center regions of $\pt{u}{v}$, say $y \in \spt{u}{v}{2}$, then, by Claim~\ref{clm:23} and using the assertion that $v\in\spt{x}{y}{1}$, we get that  $u\in\spt{x}{y}{1}$ as well. Therefore, $y$ was assigned a center orientation in the first phase due to $e$, in contradiction to our assumption.
	\end{enumerate}
		
\end{description}

\noindent
{\bf Both vertices of $f$ are not yet oriented.}
If $x,y \in \spt{u}{v}{1}$, then $v$'s orientation was determined by the second condition in the first phase (since if it were determined by the first condition, then we would already have an edge between $e$ and $f$). Therefore, $v$'s orientation is $v_u^c$ and $v$ is in one of the center regions of $\pt{x}{y}$, and we orient either $x$ or $y$ towards $v$ to create a transmission edge between $e$ and $f$.
 
Assume, therefore, that at least one of $f$'s vertices, say $x$, is not in $\spt{u}{v}{1}$. Now, if $u \notin \spt{x}{y}{1}$, then we orient $u$ and $x$ towards each other to create the edge $\{u,x\}$. So assume, in addition, that $u \in \spt{x}{y}{1}$. Under these assumptions, we show that regardless of the location of $x$ in $\pt{u}{v}$, $y \notin \spt{u}{v}{1}$, so $u$ and $y$ can be oriented towards each other to create the transmission edge $\{u,y\}$. 

\begin{enumerate}
		\item If $x$ is in one of the center regions of $\pt{u}{v}$, say $x \in \spt{u}{v}{2}$, then $y \notin \spt{u}{v}{1}$. Since, $y \in \spt{u}{v}{1}$, $x \in \spt{u}{v}{2}$ and $u \in \spt{x}{y}{1}$ implies (by Claim~\ref{clm:32}) that $y$ was already oriented in the first phase.
		
		\item If $x \in \spt{u}{v}{3}$, then again $y \notin \spt{u}{v}{1}$. Since, $y \in \spt{u}{v}{1}$ and $x \in \spt{u}{v}{3}$ implies (see Claim~\ref{clm:1324}) that $u$ is in one of the center regions of $\pt{x}{y}$, contradicting the assumption $u\in\spt{x}{y}{1}$.
\end{enumerate}

We now tend to the case where both vertices of $e$ are already oriented, but there is no transmission edge between $e$ and $f$.
We first notice that this means that one of the vertices of $e$, say $u$, was oriented due to $f$. Moreover, $u$'s orientation was determined by the second condition in the first phase, since otherwise an edge connecting $e$ and $f$ would already exist in the transmission graph. Next, we notice that at least one of the vertices of $f$ was not yet oriented, since if both were oriented, then, again, one of them was oriented due to $e$ and its orientation was determined by the second condition in the first phase. But, this implies that the first condition applies to both $u$ and this vertex of $f$ and that a transmission between them was already exists.

Now, since $u$'s orientation was determined by the second condition in the first phase, we know that it is in one of the center regions of $\pt{x}{y}$. We can therefore orient the vertex of $f$ that is not yet oriented towards $u$ to create the required transmission edge.  

\begin{figure}[ht]
	\begin{subfigure}{0.48\textwidth}
		\includegraphics[width=\textwidth, page=4]{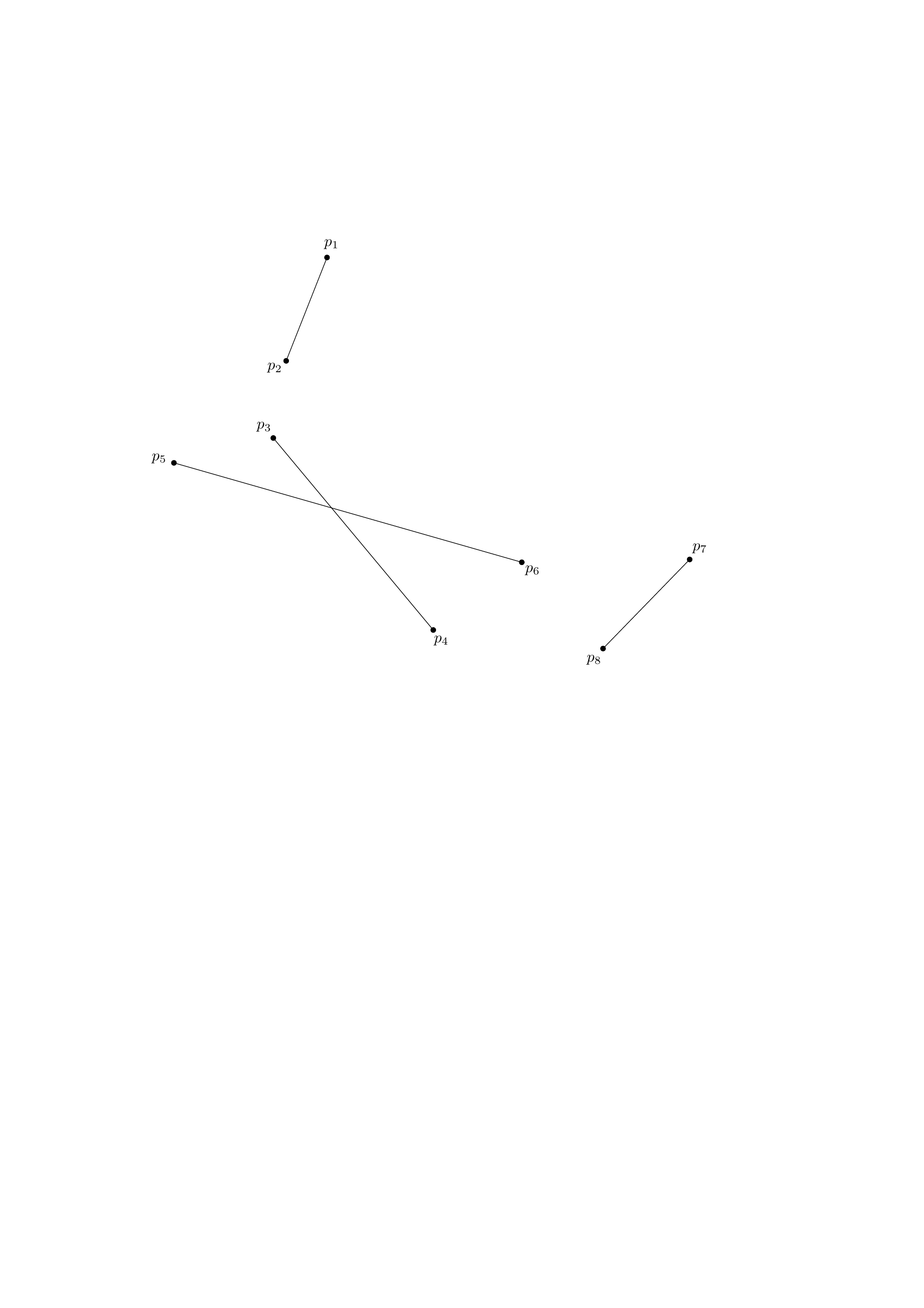}
		\subcaption{After Phase I.}
	\end{subfigure}
	\hfill
	\begin{subfigure}{0.48\textwidth}
		\includegraphics[width=\textwidth, page=5]{alg_example.pdf}
		\subcaption{After Phase II.}
	\end{subfigure}
	
	\vspace{0.2cm}
	
	\begin{subfigure}{0.48\textwidth}
		\includegraphics[width=\textwidth, page=6]{alg_example.pdf}
		\subcaption{After Phase III.}
	\end{subfigure}
	\hfill
	\begin{subfigure}{0.48\textwidth}
		\includegraphics[width=\textwidth, page=7]{alg_example.pdf}
		\subcaption{The $\frac{2\pi}{3}$-ST that is obtained.}
	\end{subfigure}
	\caption{A complete example. In the first phase, the vertex $p_1$ ($p_4$) is assigned a center orientation (depicted by the red triangle) due to the edge $\{p_3, p_4\}$ ($\{p_1, p_2\}$) (first condition). The vertex $p_5$ is assigned a center orientation (depicted by the violet triangle) due to the edge $\{p_7, p_8\}$ (second condition). In the first round of the second phase, $p_3$ and $p_6$ are oriented towards each other, when considering the consecutive edges $\{p_3,p_4\}, \{p_5,p_6\}$ (the green triangles). In the third phase, the vertices $p_2$, $p_7$, and $p_8$ are assigned orientations (the blue triangles). The bottom-right figure shows the $\frac{2\pi}{3}$-ST that is obtained.}
	\label{fig:complete_example}
\end{figure}

At this point, the edge set of our transmission graph $G$ contains $X$ and at least one edge, for each pair $e,f$ of consecutive edges of $X$, connecting a vertex of $e$ and a vertex of $f$. Let ${\cal T}$ be the graph obtained from $G$ by leaving only one (arbitrary) edge, for each pair of consecutive edges of $X$. Then, ${\cal T}$ is a $\frac{2\pi}{3}$-spanning tree of $P$ (see Figure~\ref{fig:complete_example}(d)). Denote by $Y'$ the set of edges of ${\cal T}$ between (vertices of) consecutive edges of $X$. Then, $\omega({\cal T}) = \omega(X) + \omega(Y') \leq \omega(X) + (2\omega(X) + \omega(Y)) =  \omega(\Pi) + 2\omega(X) \leq 2\omega(\Pi)$.
Moreover, ${\cal T}$ is a 3-hop spanner of $\Pi$, in the sense that if $\{p,q\}$ is an edge of $\Pi$, then there is a path between $p$ and $q$ in ${\cal T}$ consisting of at most 3 edges.

\noindent
{\bf The non-perfect case.}
It was convenient to assume that $X$ is a perfect matching, but it is possible of course that it is not. More precisely, if $n$ is odd, then $|X| = \lfloor{\frac{n}{2}}\rfloor$ and either $p_1$ or $p_n$ remain unmatched, and if $n$ is even, then either $|X| = \frac{n}{2}$ or $|X| = \frac{n}{2}-1$, where in the latter case both $p_1$ and $p_n$ remain unmatched.
However, it is easy to deal with the case where $X$ is not a perfect matching, by converting it to the case where it is. Roughly, for each unmatched point $p \in P$, we add a new point $p'$ to $P$ and add the edge $e=\{p,p'\}$ to $X$. We then apply the algorithm as described above. 

We now provide a more detailed description of this reduction. Let $f=\{u,v\}$ be the edge of $X$ adjacent to $p$ (e.g., if $p=p_1$, then $f=\{p_2,p_3\}$). We draw $p'$ close enough to $p$, ensuring that both points lie in the same region of $\pt{u}{v}$, and add the edge $e = \{p,p'\}$ to $X$. We now apply the algorithm above to the perfect matching $X$, and consider the resulting transmission graph $G$. If $G$ contains an edge between $p$ and a vertex of $f$, then simply remove the point $p'$ from $G$. Otherwise, $G$ must contain an edge between $p'$ and a vertex of $f$, say $u$. In this case, we orient $p$ towards $u$, thus creating the transmission edge $\{p,u\}$ (since $p$ is also in $u$'s transmission cone). Finally, we remove the point $p'$.

\noindent
{\bf Running time.}
The first and third phases of the algorithm each consist of a single round, whereas the second phase consists of $O(n)$ rounds. In each round we traverse the edges of $X$ from first to last and spend $O(1)$ time at each edge. Thus, the running time of the first and third phases is $O(n)$, whereas the running time of the second phase is $O(n^2)$. We show below that the quadratic bound on the running time is due to our desire to keep the description simple, and that by slightly modifying the second phase we can reduce its running time to $O(n)$.
The modification is based on the observation that beginning from the second round, an operation is performed when considering the pair $e_i, e_{i+1}$ of edges of $X$ (i.e., a transmission edge between them is created) if (i) an operation was performed in the previous round when considering $e_{i+1}$ and $e_{i+2}$, or (ii) an operation was performed in the current round when considering $e_{i-1}$ and $e_i$ (or both).

Using this observation, we prove that two rounds are sufficient. Specifically, in the first round, we traverse the edges of $X$ from first to last, i.e., a \emph{forward} round, and in the second round, we traverse the edges of $X$ from last to first, i.e., a \emph{backward} round. In both rounds, in each iteration we consider the current edge and the following one, and check whether an operation can be performed (i.e., a transmission edge can be created), under the four rules listed in Section~\ref{sec:phase2}. We refer to such an operation as a \emph{legal} operation.  

We now prove that once we are done, no legal operation can be performed when considering a pair of adjacent edges of $X$.
Indeed, let $g=\{p_{i},p_{i+1}\}$, $f=\{p_{i+2},p_{i+3}\}$, $e$, and $d$ be four consecutive edges of $X$, and assume that after the backward round, one can still perform a legal operation when considering the pair $e$ and $f$. 
Then, the operation became legal after an operation was performed when considering the pair $f$ and $g$. Since, if it became legal after an operation was performed when considering the pair $d$ and $e$, then we would have performed it during the backward round. However, by our assumption, no operation was performed during the backward round when considering the pair $e$ and $f$, and therefore no operation was performed in this round when considering the pair $f$ and $g$ --- contradiction.

The following theorem summarizes the main result of this section.
\begin{theorem}
\label{thm:path2approx}
	Let $P = \{p_1,\ldots,p_n\}$ be a set of points in the plane, and let $\Pi$ denote the polygonal path $(p_1,...,p_n)$. Then, one can construct, in $O(n)$-time, a $\frac{2\pi}{3}$-spanning tree ${\cal T}$ of $P$, such that (i) $\omega({\cal T}) \le 2\omega(\Pi)$, and (ii) ${\cal T}$ is a 3-hop spanner of $\Pi$.
\end{theorem}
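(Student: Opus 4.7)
The plan is to run the three-phase algorithm described in Section~3 and verify that it meets all three requirements of the theorem: $\frac{2\pi}{3}$-boundedness, the weight bound, and the 3-hop property, and finally show it can be implemented in linear time. Without loss of generality I assume $\omega(X)\le\omega(\Pi)/2$, where $X$ is the matching consisting of the odd-indexed edges of $\Pi$, and (by the reduction at the end of the section) that $X$ is a perfect matching.

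The first task is \emph{correctness}: after Phases I, II, III every vertex of $P$ has been assigned a basic orientation with respect to its $X$-partner, and for every pair of consecutive edges $e=\{u,v\},\,f=\{x,y\}$ of $X$ at least one transmission edge between $e$ and $f$ has been created. Phase I needs essentially no proof beyond Claims~\ref{clm:firstPhaseAtMostOne} and~\ref{clm:firstPhaseFirstRuleSymmetry}. The heart of the argument is the invariant preserved into Phase III: whenever we arrive at $e$ and find that no edge to $f$ has yet been produced, we can still create one. I would prove this by the same case analysis already carried out in the ``Correctness'' subsection, organised around how many vertices of $e$ and $f$ are still unoriented and, when relevant, around the location of $x,y$ inside the partition $\pt{u}{v}$ and vice versa. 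The only non-routine step is the case ``only one vertex of $e$ (say $u$) is unoriented, no edge to $f$ exists yet, and $u$ cannot be oriented toward an already oriented vertex of $f$'': here Claims~\ref{clm:1324}, \ref{clm:23}, \ref{clm:32} are invoked to rule out the bad region-configurations and to exhibit, in each remaining sub-case, an allowable pair of orientations that produces a transmission edge. I expect this case analysis to be the main obstacle, essentially because one must check that the rules of Phase~II (No reorienting, Center orientation, Edge creation, No double tapping) did not already force an impossible configuration; the key is that any alleged obstruction forces either an edge that should have been built in Phase~II or a contradiction with a first-phase decision.

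Once correctness is in place, let $G$ be the transmission graph produced and let $\mathcal{T}$ be obtained from $G$ by keeping $X$ together with exactly one chosen edge per consecutive pair of $X$-edges. Then $\mathcal{T}$ is connected, has $n-1$ edges and is therefore a spanning tree of $P$; since every vertex is assigned a $\frac{2\pi}{3}$-cone that contains all its incident transmission edges, $\mathcal{T}$ is a $\frac{2\pi}{3}$-ST. For the weight bound, write the chosen inter-pair edges as $Y'$. Each edge of $Y'$ joins a vertex of some $X$-edge $e_i$ to a vertex of $e_{i+1}$, so by the triangle inequality its length is at most $|e_i|+|p_{2i},p_{2i+1}|+|e_{i+1}|$ where $\{p_{2i},p_{2i+1}\}$ is the corresponding $Y$-edge. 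Summing telescopically yields $\omega(Y')\le 2\omega(X)+\omega(Y)$, hence $\omega(\mathcal{T})=\omega(X)+\omega(Y')\le 2\omega(X)+\omega(Y)+\omega(X)=\omega(\Pi)+2\omega(X)\le 2\omega(\Pi)$. The 3-hop spanner property is immediate: an edge of $\Pi$ is either an edge of $X$ (already in $\mathcal{T}$, one hop) or a $Y$-edge connecting a vertex of $e_i$ to a vertex of $e_{i+1}$, and then the $X$-edge of $e_i$, the chosen inter-pair edge, and the $X$-edge of $e_{i+1}$ give a path of at most three edges between its endpoints.

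For the running time, Phases~I and~III are obviously $O(n)$ since they are single sweeps doing $O(1)$ work per $X$-edge, and the reduction to a perfect matching adds $O(1)$ work per unmatched endpoint. The only issue is Phase~II, which I would implement as the forward-then-backward two-round variant described in the text. The plan is to verify the observation that after a forward sweep, any newly legal operation at the pair $(e_i,e_{i+1})$ must have been enabled by an operation performed either in the previous round at $(e_{i+1},e_{i+2})$ or in the current round at $(e_{i-1},e_i)$; this reduces to checking (by the same edge-creation/no-double-tapping bookkeeping used in Phase~II) that operations propagate only to neighbouring $X$-pairs. A single backward sweep therefore clears every remaining legal operation, giving total time $O(n)$. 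Combining the three phases with the matching-parity reduction completes the proof.
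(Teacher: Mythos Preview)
Your proposal is correct and follows essentially the same approach as the paper: you invoke the three-phase algorithm, defer the connectivity argument to the case analysis in the Correctness subsection (using Claims~\ref{clm:1324}, \ref{clm:23}, \ref{clm:32}), derive the same weight bound $\omega(\mathcal{T})=\omega(X)+\omega(Y')\le\omega(\Pi)+2\omega(X)\le 2\omega(\Pi)$, give the same 3-hop argument, and handle the running time via the forward-then-backward variant of Phase~II. The only addition is that you spell out the triangle-inequality justification for $\omega(Y')\le 2\omega(X)+\omega(Y)$, which the paper leaves implicit.
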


\begin{corollary}
	Let $P = \{p_1,\ldots,p_n\}$ be a set of $n$ points in the plane. Then, one can construct in $O(n\log n)$-time a $\frac{2\pi}{3}$-ST\,  ${\cal T}$ of $P$, such that $\omega({\cal T}) \le 4\omega(\mathrm{MST}(P))$.  
\end{corollary}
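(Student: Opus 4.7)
The plan is to combine a standard MST-to-path reduction with the theorem just proved. First, I would compute a Euclidean minimum spanning tree $\mathrm{MST}(P)$ in $O(n \log n)$ time using a standard algorithm (for instance, via the Delaunay triangulation followed by Kruskal/Prim on its $O(n)$ edges). This gives a tree of weight exactly $\omega(\mathrm{MST}(P))$.

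Next, I would extract a spanning path $\Pi$ of $P$ from $\mathrm{MST}(P)$ using the tour-shortcutting trick already mentioned in the introduction: perform a DFS traversal of $\mathrm{MST}(P)$ and list each vertex at the moment it is first visited, yielding an ordering $(p_1,\ldots,p_n)$. This in-order listing corresponds to shortcutting a traversal of the doubled MST (which has weight $2\omega(\mathrm{MST}(P))$), and by the triangle inequality the resulting path $\Pi$ satisfies $\omega(\Pi) \le 2\omega(\mathrm{MST}(P))$. This step runs in $O(n)$ time.

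Finally, I would invoke Theorem~\ref{thm:path2approx} on the path $\Pi$ to construct, in $O(n)$ time, a $\frac{2\pi}{3}$-ST ${\cal T}$ of $P$ with $\omega({\cal T}) \le 2\omega(\Pi)$. Chaining the two weight inequalities gives
\[
\omega({\cal T}) \;\le\; 2\omega(\Pi) \;\le\; 2 \cdot 2\omega(\mathrm{MST}(P)) \;=\; 4\omega(\mathrm{MST}(P)),
\]
as required. The total running time is $O(n \log n) + O(n) + O(n) = O(n \log n)$, dominated by the MST computation.

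There is essentially no obstacle here; the proof is a three-line assembly of a classical MST-shortcut argument with Theorem~\ref{thm:path2approx}. The only mild point worth noting is that we do not need the 3-hop spanner property of ${\cal T}$ for this corollary; only the weight bound from clause (i) of the theorem is used.
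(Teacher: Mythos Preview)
Your proposal is correct and matches the paper's approach exactly: compute $\mathrm{MST}(P)$ in $O(n\log n)$ time, extract a spanning path $\Pi$ with $\omega(\Pi)\le 2\omega(\mathrm{MST}(P))$ via an in-order traversal (as described in the introduction), and apply Theorem~\ref{thm:path2approx} to obtain $\omega({\cal T})\le 2\omega(\Pi)\le 4\omega(\mathrm{MST}(P))$. The paper states the corollary without a separate proof precisely because it is this immediate chaining.
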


\old{
\section{conclusion} 
Ignoring the first stage of the algorithms of Aschner and Katz~\cite{AschnerK17} and Biniaz et al.~\cite{BiniazBLM20}, in which a (non-crossing) spanning path of $P$ is computed (in either $O(n\log n)$ time or $O(n^3)$ time respectively), the running time of these algorithms is $O(n)$, while the running time of our algorithm is $O(n^2)$. The reason for this is ... 
}

\bibliography{refs}

\end{document}